\documentclass[11pt]{article}
\usepackage{amsmath,amssymb,latexsym,cite}
\usepackage{pstricks,graphicx}
\usepackage{subfigure,amsthm,fullpage}

\def\tr{\mathop{\rm tr}\nolimits}       %

\newtheorem{theorem}{Theorem}
\newtheorem{corollary}{Corollary}
\newtheorem{lemma}{Lemma}

\def\xcostb{\Gamma}

\def\scostb{\Lambda}

\def\jpwr{\scostb}
\def\xpwr{\xcostb}

\def\stone{\sigma_t}
\def\st{\sigma_t^2}

\def\corr{\rho}
\def\boost{\alpha}
\def\optboost{\alpha_{0}}
\def\Rbin{R_{\mathrm{bin}}}
\def\rateeps{\epsilon_1}
\def\whpeps{\epsilon_2}

\def\prob{\mathbb{P}}
\def\expe{\mathbb{E}}
\def\mbf{\mathbf}
\def\mc{\mathcal}
\def\argmin{\mathop{\rm argmin}}
\def\argmax{\mathop{\rm argmax}}

\def\diag{\mathop{\rm diag}}

\newcommand{\ip}[2]{\left\langle #1,\ #2 \right\rangle}
\newcommand{\norm}[1]{\left\| #1 \right\|}

\newcommand{\mi}[2]{\ensuremath{I \left( #1 \ \wedge \ #2 \right)}}

\def\scostb{\Lambda}

\def\denc{\phi}
\def\ddec{\psi}
\def\renc{\Phi}
\def\rdec{\Psi}

\def\stdmaxerr{\varepsilon}

\def\avgerr{\bar{\varepsilon}}

\def\capeps{\epsilon_c}

\def\sw{\sigma_w^2}

\def\Sx{\Sigma_X}
\def\Sw{\Sigma_W}

\def\st{\sigma_t^2}

\newcommand{\noi}[1]{\sigma_{#1}^2}

\title{Relaxing the Gaussian AVC
\thanks{The work of A.D. Sarwate and M. Gastpar was supported in part by the National Science Foundation under award CCF-0347298.  A.D. Sarwate was also supported by the California Institute for Telecommunications and Information Technology (CALIT2) at UC San Diego.  Some of these results were presented at ISIT 2006 \cite{SarwateG:06gavc}, Allerton 2006 \cite{SarwateG:06brother}, CISS 2008 \cite{Sarwate08:ciss}, ISIT 2008 \cite{SarwateG:08isit} and appear in the first author's dissertation \cite{Sarwate:08thesis}.}}

\author{Anand~D.~Sarwate\thanks{A.D. Sarwate is with the Toyota Technological Institute at Chicago, 6045 S. Kenwood Ave., Chicago, IL 60637 USA (e-mail: \texttt{asarwate@ttic.edu}).}  \and~Michael~Gastpar \thanks{ M. Gastpar is with the the Department of Electrical Engineering and Computer Sciences, University of California, Berkeley, CA 94720 USA, and with the School of Computer and Communication Sciences, Ecole Polytechnique Fdrale (EPFL), 1015 Lausanne, Switzerland (e-mail: 
\texttt{michael.gastpar@epfl.ch}).}}

\date{\today}

\begin{document}

\maketitle

\begin{abstract}
The arbitrarily varying channel (AVC) is a conservative way of modeling
an unknown interference, and the corresponding capacity results
are pessimistic.  We reconsider the Gaussian AVC by relaxing
the classical model and thereby weakening the adversarial nature of
the interference.  We examine three different relaxations. First, we
show how a very small amount of common
randomness between transmitter and receiver is sufficient to achieve
the rates of fully randomized codes. Second,
akin to the dirty paper coding problem, we study the impact of
an additional interference known to the transmitter.  We provide 
partial capacity results that differ significantly from the standard AVC.
Third, we revisit a Gaussian MIMO AVC in which the interference is
arbitrary but of limited dimension. 
\end{abstract}

\section{Introduction}

The arbitrarily varying channel is an information-theoretic model of communication under worst-case noise \cite{BlackwellBT:60random,LapidothN:98survey}.  In the Gaussian AVC (GAVC) \cite{HughesN:87gavc} an additive white Gaussian noise (AWGN) channel is modified by adding a power-constrained jamming interference signal.  As in the discrete AVC with constraints \cite{CsiszarN:88constraints}, the capacity is well defined when the power constraints $\xcostb$ and $\scostb$ on the input and jammer are required to hold almost surely.  When the encoder and decoder share common randomness, the jammer can be no more harmful than Gaussian noise, but without common randomness the capacity is zero when $\xcostb \le \scostb$ because the jammer can simulate the encoder and ``symmetrize'' the channel.  If $\xcostb > \scostb$ then under average error the jammer is again no worse than Gaussian noise.  

The GAVC gives one way of understanding the impact of uncertainty on the capacity of point-to-point channels.  However, the dichotomy for deterministic coding reflects the effect of worst-case analysis.  By contrast, by assuming the interference comes from power-limited but arbitrary random noise, it can be shown that the worst-case noise is Gaussian and the capacity is the AWGN capacity \cite{Lapidoth:96additive,Diggavi98thesis}.  Similarly, by allowing feedback and causal coding, arbitrary interference in an ``individual channel'' model may also look Gaussian \cite{LomnitzF:11ind}.  In this paper we reexamine the GAVC model to see how different variants of the model shed insights into what can be achieved against ``worst-case'' interference.

We describe three variants of the GAVC model:
	\begin{enumerate}
	\item In the first model we allow the encoder and decoder to share a limited amount of common randomness.  In particular, we show that $O(\log n)$ bits of common randomness are sufficient to achieve the randomized coding capacity of the AVC, where $n$ is the coding blocklength.  Essentially, a small amount of randomness is sufficient to make the most malicious interference as harmless as random noise.
	\item In the second model, in addition to the noise and the jammer, there
is an additional interference which is known only to the transmitter.
A new achievable rate is found for this setting.  Perhaps
somewhat surprisingly, the presence of this additional
interference increases the capacity, helping to beat the jammer.
Capacity results are found for special cases.
	\item The third model is the Gaussian MIMO AVC under fully randomized coding \cite{HughesN:88vgavc}.  In addition to a power constraint, we also constrain the dimensionality of the jamming signal.  In general this leads to higher rates, and we find the exact capacity for the case of 2 transmit and 2 receive antennas under interference from a single-antenna jammer.
	\end{enumerate}

More generally, these relaxations of the Gaussian AVC shed some light on the nature of worst-case interference.  There are two ways in which the jammer behaves in a worst-case manner.  The capacity dichotomy for deterministic coding arises because the jammer can simulate the valid encoder.  When the capacity is positive, it is limited by the jammer choosing the worst noise distribution.  Our work shows that if the jammer cannot implement these strategies, the corresponding capacity is often higher.  Because these three models are somewhat different from each other, we introduce the relevant definitions with the results\footnote{\textbf{Note to reviewers}:  For ease of reviewing, the proofs are provided in the relevant sections.  However, it is our intention to move most of the technical lemmas to the appendix so that the main body of the paper can be read easily.}.

\section{Limited Common Randomness}

Our first relaxation of the GAVC is a classic one -- we allow the encoder and decoder to share common randomness that is unknown to the jammer \cite{HughesN:87gavc}.  However, in contrast to previous works, we focus on the amount of randomization, or key size.   Our main result is that as in the
discrete case, we can use a sub-exponential number (in the blocklength $n$) of codebooks to obtain an asymptotically decreasing upper bound on the probability of error.

\begin{figure}
\begin{center}
\psset{xunit=0.06cm,yunit=0.06cm,runit=0.06cm}
\begin{pspicture}(0,5)(130,50)
\rput(5,30){$i$}
\psline{->}(10,30)(20,30)
\psframe(20,25)(40,35)
\rput(30,30){\textsc{Enc}}
\psline{->}(30,15)(30,25)
\rput(30,10){$(\renc,\rdec)$}
\psline{->}(40,30)(90,30)
\rput(46,35){$\mbf{X}_i$}

\pscircle(55,30){4}
\pscircle(70,30){4}
\psline(55,26)(55,34)
\psline(70,26)(70,34)
\psline{->}(55,40)(55,34)
\psline{->}(70,40)(70,34)
\rput(55,45){$\mbf{s}$}
\rput(70,45){$\mbf{W}$}

\rput(83,35){$\mbf{Y}$}
\psframe(90,25)(110,35)
\rput(100,30){\textsc{Dec}}
\psline{->}(100,15)(100,25)
\rput(100,10){$(\renc,\rdec)$}
\psline{->}(110,30)(120,30)
\rput(125,30){$\hat{i}$}

\end{pspicture}
\caption{The Gaussian arbitrarily varying channel under randomized coding.
  \label{fig:randgavc}}
\end{center}
\end{figure}
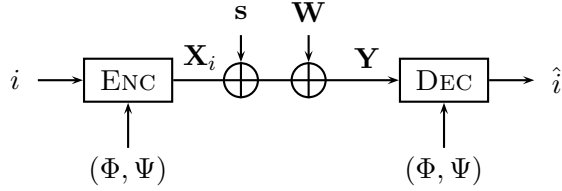

\subsection{Channel model}

The Gaussian AVC is shown in Figure \ref{fig:randgavc}.  For an input sequence $\mbf{x} \in \mathbb{R}^n$ the output of the Gaussian AVC is given by
	\begin{align*}
	\mbf{Y} = \mbf{x} + \mbf{s} + \mbf{W}.
	\end{align*}
The input is corrupted by iid additive white Gaussian noise
$\mbf{W}$ with variance $\sw$ and an unknown interference vector
$\mbf{s}$.  The input signal $\mbf{x}$ and jammer signal $\mbf{s}$ are
constrained in power:
	\begin{align*}
	\frac{1}{n} \norm{\mbf{x}}^2 &\le \xcostb \\
	\frac{1}{n} \norm{\mbf{s}}^2 &\le \scostb.
	\end{align*}
Define
	\begin{align}
	\mc{S}^n(\scostb) = \left\{ \mbf{s} : \norm{\mbf{s}}^2 \le n \scostb \right\}.
	\label{eq:allowS}
	\end{align}

An $(n,N)$ deterministic code $\mc{C}$ satisfying the input constraint $\xcostb$ is a pair of maps $(\denc,\ddec)$ with
	\begin{align*}
	\denc : [N] &\to \mathbb{R}^n \\
	\ddec : \mathbb{R}^n &\to [N],
	\end{align*}
such that for all $i \in [N]$ we have $\norm{\denc(i)}^2 \le n \xcostb$. 
An $(n,N)$ randomized code $\mbf{C}$ satisfying the input constraint $\xcostb$ is a random variable taking on values in the set of $(n,N)$ deterministic codes.  It is written as a pair of random maps $(\renc,\rdec)$ where each realization is an $(n,N)$ deterministic code satisfying the constraint $\xcostb$.  If $(\renc, \rdec)$ almost surely takes values in a set of $K$ codes, then we call this an $(n,N,K)$ randomized code.  The number $K$ is called the \textit{key size} of the randomized code.

In this section we will consider randomized coding and maximal probability of error:
	\begin{align}
	\stdmaxerr(\mbf{C},\mbf{s}) 
	&= \max_{i \in [N]} 
		\expe_{\mbf{C}}\left[
			\prob_{\mbf{W}}\left(
				\rdec( \renc(i) + \mbf{s} + \mbf{w}) \ne i 
			\right)
		\right] 
			\label{maxerr2} \\
	\stdmaxerr(\mbf{C}) 
	&= \max_{\mbf{s} \in \mc{S}^n(\scostb)} 
			\stdmaxerr(\mbf{C},\mbf{s}).
	\end{align}
A rate $R$ is achievable under maximal error with randomized coding if there exists a sequence of $(n,\lceil \exp(nR) \rceil)$ randomized codes whose maximal error goes to $0$ as $n \to \infty$.  The randomized coding capacity under maximal error $C_r(\xcostb,\scostb,\sw)$ is the supremum of the achievable rates under maximal error with randomized coding.  We will write $C_r$ when the parameters are clear.

\subsection{Main results}

Hughes and Narayan \cite{HughesN:87gavc} showed that if the input and
jammer are both bounded in power almost surely and the random variable
$\mbf{C}$ is unconstrained, then the capacity is equal to that of an
additive white Gaussian noise (AWGN) channel with the jammer treated
as additive noise:
	\begin{align}
	C_r(\xcostb,\scostb,\sw) = 
		\frac{1}{2} \log \left( 1 + \frac{\xcostb}{\scostb + \sw} \right).
	\label{eq:gaussian_cr}
	\end{align}
Csisz\'{a}r and Narayan \cite{CsiszarN:91gavc} showed that if only deterministic codes are allowed and the error criterion is replaced by the average probability of error:
	\begin{align*}
	\bar{\stdmaxerr}(\mbf{C}) = \frac{1}{N} \sum_{i=1}^{N} \prob_{\mbf{W}}\left(
				\ddec( \denc(i) + \mbf{s} + \mbf{w}) \ne i 
			\right), 
	\end{align*}
the capacity is equal to (\ref{eq:gaussian_cr}) if and only if the encoder has a higher
power limit than the jammer:
	\begin{align}
	\bar{C}_d(\xcostb,\scostb,\sw) = \left\{ 
		\begin{array}{ll}
		0 & 		\xcostb \le \scostb \\
		C_r(\xcostb,\scostb,\sw)   & \xcostb > \scostb.
		\end{array}
		\right.
	\label{eq:gaussian_cd}
	\end{align}
Recall that $f(n) = O(g(n))$ means there is a constant $c$ such that $f(n) \le c g(n)$ for sufficiently large $n$.  Our main result is to show that if $\log K(n) = O(\log n)$, for any $\capeps > 0$ the rate $C_r-\capeps$ is achievable using randomized codes with key size $K(n)$.  That is, $O(\log n)$ bits of common randomness is sufficient to achieve the randomized coding capacity of the GAVC.

\begin{theorem}
\label{thm:commrand}
The randomized coding capacity $C_r(\xcostb,\scostb,\sw)$ of the GAVC is achievable using randomized codes whose key size satisfies $\log K(n) = O(\log n)$.
\end{theorem}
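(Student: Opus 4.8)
The plan is to mimic the standard derandomization argument of Ahlswede for discrete AVCs, adapted to the Gaussian setting with its almost-sure power constraint. Start from a fully randomized code $\mbf{C}^\star$ of rate $R = C_r - \capeps/2$ and blocklength $n$ whose maximal error $\stdmaxerr(\mbf{C}^\star)$ is at most some small $\delta_n \to 0$; such a code exists by the Hughes--Narayan result~\eqref{eq:gaussian_cr}. The key point is that we do not need to derandomize completely -- we only need to replace the (possibly continuous, infinitely-supported) random variable $\mbf{C}^\star$ by a \emph{uniform} distribution over a sublist of only $K(n) = \mathrm{poly}(n)$ deterministic codes, each still satisfying the input power constraint, while keeping the maximal error small against \emph{every} admissible jamming sequence $\mbf{s} \in \mc{S}^n(\scostb)$.

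First I would discretize the jammer's action space: although $\mc{S}^n(\scostb)$ is uncountable, the error probability $\prob_{\mbf{W}}(\rdec(\renc(i)+\mbf{s}+\mbf{w}) \ne i)$ is a Lipschitz (or at least uniformly continuous) function of $\mbf{s}$ on the compact ball of radius $\sqrt{n\scostb}$, so it suffices to control the error on a finite $\epsilon$-net $\mc{N}_n \subset \mc{S}^n(\scostb)$. A standard volumetric bound gives $|\mc{N}_n| \le \exp(c n)$ for a constant $c$ depending only on $\scostb, \sw$ and the desired resolution. Second, draw $K$ codes $\mc{C}_1, \dots, \mc{C}_K$ i.i.d. from the distribution of $\mbf{C}^\star$ and, for each fixed message $i$ and each net point $\mbf{s} \in \mc{N}_n$, apply a Chernoff/Hoeffding bound to the average $\frac{1}{K}\sum_{j=1}^K \prob_{\mbf{W}}(\rdec_j(\renc_j(i)+\mbf{s}+\mbf{w}) \ne i)$ around its mean (which is $\le \delta_n$). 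Choosing $K = K(n)$ polynomial in $n$ makes the deviation probability for a single $(i,\mbf{s})$ pair doubly exponentially small in $n$, and then a union bound over the $N \le \exp(nR)$ messages and the $\exp(cn)$ net points still leaves the failure probability below $1$. Hence \emph{some} realization of $(\mc{C}_1,\dots,\mc{C}_K)$ yields an $(n, N, K)$ randomized code (the uniform mixture over these $K$ codes) whose maximal error over the net is at most, say, $2\delta_n$; extending from the net to all of $\mc{S}^n(\scostb)$ via the continuity estimate costs an extra $\capeps$-free additive term that vanishes as the net is refined.

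With $\log K(n) = O(\log n)$ this is exactly the claimed statement, since the rate loss incurred was only $\capeps/2$ from $C_r$ and the error still tends to $0$; taking $\capeps \to 0$ along a diagonal sequence gives achievability of $C_r$ itself. The main obstacle -- and the place where the Gaussian setting genuinely differs from the discrete AVC -- is handling the continuum of jamming vectors: one must verify that the error probability is sufficiently regular in $\mbf{s}$ (uniformly in the code realization) to justify the net argument, and that the net size is only exponential (not worse) in $n$ so that the union bound survives against a merely \emph{polynomial} $K$. A clean way to get the requisite regularity is to note that adding a small perturbation $\Delta\mbf{s}$ to the jammer is absorbed by the Gaussian noise $\mbf{W}$ up to a Radon--Nikodym factor that is close to $1$ when $\norm{\Delta\mbf{s}}$ is small relative to $\sqrt{n}\,\sigma_w$; alternatively, one slightly enlarges the decoding regions. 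Either route reduces the continuous jammer to an effectively finite adversary, after which the Ahlswede-style random-selection bound goes through verbatim.
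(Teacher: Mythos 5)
Your proposal is correct in its essential strategy, but it takes a genuinely different route from the paper. The paper does not start from an arbitrary capacity-achieving randomized code and derandomize it; instead it builds the randomized code explicitly as $K$ uniformly random unitary rotations $\{U_k\mc{B}\}$ of a single good spherical codebook $\mc{B}$, exploits the fact that $U_k^{-1}\mbf{s}$ is uniform on the sphere (so each key sees the jammer as spherically uniform noise), applies concentration over the $K$ keys and a union bound over the messages and an $\exp(n(\rho+\epsilon))$-point sphere-covering (rate-distortion) codebook, and then extends from the covering points to every admissible $\mbf{s}$ by a geometric minimum-distance argument (bounding $\ip{\mbf{x}_j-\mbf{x}_i}{\mbf{s}}$ via Cauchy--Schwartz and a pairwise-distance lower bound, plus a monotonicity argument in $\norm{\mbf{s}}$). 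Your Ahlswede-style random selection from an arbitrary good randomized code, combined with a likelihood-ratio/total-variation continuity estimate in $\mbf{s}$, is more generic and arguably cleaner conceptually; the paper's construction buys an explicit key structure and an explicit error bound ($\zeta\, n/K(n)$ in Theorem~\ref{thm:gavc_noisy}) and avoids any appeal to regularity of the error in $\mbf{s}$.

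Two quantitative points in your writeup need repair, though neither is fatal. First, the total-variation argument requires the net resolution $\norm{\Delta\mbf{s}}$ to be small compared with $\sigma_w$ (a constant, or $o(1)$), and a covering of the ball of radius $\sqrt{n\scostb}$ at constant or polynomially vanishing resolution has cardinality $\exp(O(n\log n))$, not $\exp(cn)$; a net of size $\exp(cn)$ (resolution proportional to $\sqrt{n}$, as in the paper's covering) is useless for your continuity step, since then the total-variation distance tends to $1$. Second, the deviation probability for a single $(i,\mbf{s})$ pair is $\exp(-\Omega(K t^2))$ (or $\exp(-\Omega(Kt\log\delta_n^{-1}))$), i.e.\ $\exp(-\mathrm{poly}(n))$, not doubly exponential. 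The argument still closes because with, say, $K(n)=n^2$ the exponent $\Omega(n^2t^2)$ dominates the $O(n\log n)$ union-bound cost even for slowly vanishing $t$, so $\log K(n)=O(\log n)$ is preserved; but you should state the corrected net size and concentration rate and verify this comparison explicitly.
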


\subsection{Analysis}

The class of randomized codes we consider can be built in two steps.
Similar to the discrete AVC construction in \cite{HughesT:96exponent}, we ``modulate'' a
single Gaussian codebook.  Let $N = \exp(n R)$
and $M$ be an arbitrary integer.
\begin{enumerate}
\item Let $\mc{B} = \{\mbf{x}_1, \mbf{x}_2, \ldots, \mbf{x}_N\}$ be
  a set of $N$ vectors on the sphere of radius $\sqrt{n \xcostb}$.  We can choose this set to have small maximal error for both the AWGN channel with noise variance $\scostb + \sw$ and the channel with additive noise $\mbf{V} + \mbf{W}$, where $\mbf{V}$ is uniform on the sphere of radius $\sqrt{n \scostb}$ and $\mbf{W}$ is iid Gaussian noise with variance $\sw$.
\item Let $\{U_k : k = 1, 2, \ldots, K\}$ be $n \times n$ unitary
  matrices generated uniformly from the set of all unitary matrices.
  Without loss of generality take $U_1 = I$.
\item The randomized code is uniform on the set $\{U_k \mc{B} : k = 1, 2, \ldots, K\}$.  To send message $i$, the encoder draws an integer $k$ uniformly from $\{1, 2, \ldots, K\}$ and encodes its message as $U_k \mbf{x}_i$.
\item The decoder knows $k$ and chooses the codeword in $\mc{B}_k$ that minimizes the distance to the received vector $\mbf{y}$:
	\begin{align*}
	\phi(\mbf{y},k) = \argmin_{j} \norm{ \mbf{y} - U_k \mbf{x}_j }.
	\end{align*}
\end{enumerate}

For all rates below $(1/2) \log(1 + \xcostb/(\scostb + \sw))$ we can choose the codebook $\mc{B}$ to have exponentially decaying probability of error \cite{Shannon:59gaussian,GallagerIT,Lapidoth:96additive} for both the AWGN channel with noise variance $\scostb + \sw$ and the channel with additive noise $\mbf{V} + \mbf{W}$:	
	\begin{align*}
	\stdmaxerr(\mc{B}) \le \exp( -n E( n^{-1} \log N ) ).
	\end{align*}
We can use this result to get a lower bound on the pairwise distance between any two codewords. 

Consider two codewords $\mbf{x}_i$ and $\mbf{x}_j$ from the set $\mc{B}$.  Let $\gamma > 0$ be half the distance between them:
	\begin{align*}
	\norm{ \mbf{x}_i - \mbf{x}_j } = 2 \gamma.
	\end{align*}
Suppose that we transmit $\mbf{x}_i$ over an AWGN channel with noise variance $\scostb + \sw$.  Then the probability of error for message $i$ can be lower bounded by the chance that the noise in the direction of $\mbf{x}_j - \mbf{x}_i$ is larger than $\gamma$.  Since the noise is iid, the error can be bounded by the integral of a Gaussian density \cite{TseV:wirelessbook}:
	\begin{align*}
	\stdmaxerr(i) 
	& \ge \frac{1}{\sqrt{2 \pi (\scostb + \sw)}} \int_{\gamma}^{\infty} \exp\left( - \frac{1}{2 (\scostb + \sw)} z^2 \right) dz \\
	&>
	\sqrt{ \frac{\scostb + \sw}{2 \pi \gamma^2} }
		\left( 1 - \frac{\scostb + \sw}{\gamma^2} \right) 
		\exp(- \gamma^2 / 2).
	\end{align*}
Therefore there exists a $\mu > 0$ such that for sufficiently large $n$ we have $\gamma > (\mu/2) \sqrt{n}$ for some $\mu > 0$, which means that 
	\begin{align}
	\norm{ \mbf{x}_i - \mbf{x}_j } > \mu \sqrt{n}.
	\label{eq:pairwisedist}
	\end{align}

We prove a more refined version of Theorem \ref{thm:commrand}.

\begin{theorem}
  \label{thm:gavc_noisy}
  Let $K(n)$ be chosen such that $K(n)/n \to \infty$ and $n^{-1} \log(K(n)/n) \to 0$.  For input power constraint $\xcostb$, jammer power constraint $\scostb$, and $\zeta > 0$ there is an $n$ sufficiently large and an $(n,N,K(n))$ randomized code for the GAVC of rate $R < C_r(\xcostb,\scostb,\sw)$, where
  \begin{align*}
	C_r(\xcostb, \scostb,\sw) = \frac{1}{2} \log\left(1 + \frac{\xcostb}{\scostb + \sw}\right),
  \end{align*}
whose error satisfies
	\begin{align*}
	\stdmaxerr(n) &= \zeta \frac{n}{K(n)}.
	\end{align*}
\end{theorem}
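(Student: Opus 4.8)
The plan is to run the probabilistic-method argument on the modulated ensemble described above. I would fix a base codebook $\mc{B} = \{\mbf{x}_1,\dots,\mbf{x}_N\}$ of rate $R < C_r(\xcostb,\scostb,\sw)$ with the properties already recorded — exponentially small error against both the relevant AWGN channel and the channel with additive noise $\mbf{V}+\mbf{W}$, together with the pairwise-distance bound \eqref{eq:pairwisedist} — draw $U_1,\dots,U_{K(n)}$ independently and Haar-uniformly from the group of $n\times n$ unitaries, and show that with probability tending to one the resulting $(n,N,K(n))$ randomized code $\mbf{C}$ has $\stdmaxerr(\mbf{C}) \le \zeta n/K(n)$; a positive probability suffices for existence. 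The first ingredient is the rotation trick: since the decoder knows the index $k$ and $\mbf{W}$ is rotationally invariant, left-multiplying the channel output by $U_k^{\ast}$ shows that sending message $i$ over code $k$ against a jammer vector $\mbf{s}$ fails with probability $q(U_k^{\ast}\mbf{s})$, where $q(\mbf{v})$ is the probability that nearest-neighbour decoding of $\mc{B}$ errs on message $i$ under additive noise $\mbf{v}+\mbf{W}$. Hence $\stdmaxerr(\mbf{C},\mbf{s},i) = \frac{1}{K(n)}\sum_{k} q(U_k^{\ast}\mbf{s})$, an average of i.i.d.\ $[0,1]$-valued terms once the $U_k$ are randomized, and for each fixed $\mbf{s}$ with $\norm{\mbf{s}}^2 \le n\scostb$ the common mean $\expe_{U}[q(U^{\ast}\mbf{s})]$ equals the error of $\mc{B}$ against $\mbf{V}+\mbf{W}$ with $\mbf{V}$ uniform on the sphere of radius $\norm{\mbf{s}}$, which is at most $\exp(-nE)$ for some $E>0$ by the additive-noise coding bounds used to choose $\mc{B}$.

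The subtlety is that the jammer chooses $\mbf{s}$ only after the $U_k$ are fixed, so one cannot simply union-bound over a continuum of $\mbf{s}$. I would handle this with two reductions. First, pass to an \emph{eroded} decoder whose decision region for $\mbf{x}_i$ is $\decreg_i$ shrunk inward by $\rho := \alpha\sqrt{n}$ for a small constant $\alpha$, with error $\tilde q$; then $q(U_k^{\ast}\mbf{s}) \le \tilde q(U_k^{\ast}\mbf{s}')$ whenever $\norm{\mbf{s}-\mbf{s}'}\le\rho$, since a point within $\rho$ of the complement of $\decreg_i$ lies in the complement of the eroded region. Using \eqref{eq:pairwisedist} (which keeps $\rho=\alpha\sqrt{n}$ a small fraction of the interstitial distances $\mu\sqrt{n}$ once $\alpha$ is small) together with robustness of the Gaussian error exponent, $\mc{B}$ can be chosen so that $\expe_{U}[\tilde q(U^{\ast}\mbf{s})]\le\exp(-nE')$ holds uniformly over all $\norm{\mbf{s}}^2\le n\scostb$, with $E'=E'(\alpha,R)>0$. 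Second, take a $\rho$-net $\mc{N}$ of the ball $\mc{S}^n(\scostb)$; since that ball has radius $\sqrt{n\scostb}$ and $\rho=\alpha\sqrt{n}$, one may take $|\mc{N}| \le e^{cn}$ with $c = c(\alpha,\scostb)$, and by the erosion bound it then suffices to control $\frac{1}{K(n)}\sum_{k} \tilde q(U_k^{\ast}\mbf{s}')$ over the finitely many pairs $(\mbf{s}',i)$ with $\mbf{s}'\in\mc{N}$ and $i\in[N]$.

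For each such fixed pair the terms $\tilde q(U_k^{\ast}\mbf{s}')$ are i.i.d.\ in $[0,1]$ with mean at most $\exp(-nE')$, so from the bound $e^{\theta t}\le 1+(e^{\theta}-1)t$ on $[0,1]$,
	\begin{align*}
	\prob\Bigl(\tfrac{1}{K(n)}\textstyle\sum_{k} \tilde q(U_k^{\ast}\mbf{s}') > \zeta n/K(n)\Bigr)
	\le \exp\bigl(K(n)(e^{\theta}-1)e^{-nE'} - \theta\zeta n\bigr)
	\end{align*}
for every $\theta>0$. The hypotheses $K(n)/n\to\infty$ and $n^{-1}\log(K(n)/n)\to 0$ give $\log K(n) = \log n + o(n)$, so choosing $\theta = nE'/2$ forces $K(n)(e^{\theta}-1)e^{-nE'}\to 0$ and the right-hand side to be at most $\exp(-\tfrac14\zeta E' n^2)$ for all large $n$. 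A union bound over the at most $e^{(c+R)n}$ pairs $(\mbf{s}',i)$ gives total failure probability at most $e^{(c+R)n}\exp(-\tfrac14\zeta E' n^2)\to 0$, so for $n$ large there is a choice of $U_1,\dots,U_{K(n)}$ with $\stdmaxerr(\mbf{C})\le\zeta n/K(n)$, which (as $\zeta>0$ is arbitrary) is the assertion.

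I expect the only real difficulty to be the first reduction: certifying that one \emph{fixed} base codebook can be chosen whose eroded decoder retains an exponentially small error simultaneously for every jammer of power at most $\scostb$ and every uniform rotation of the jammer. The rotation trick and the i.i.d.\ Chernoff/union-bound steps are routine; the erosion step is where \eqref{eq:pairwisedist} and the refined random-coding/expurgated exponent bounds for nearest-neighbour decoding over additive non-Gaussian noise must be combined carefully, and where taking $\alpha$ small enough — at the cost of a larger net, which is absorbed by the $n^2$ in the concentration exponent — is essential.
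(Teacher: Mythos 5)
Your proposal is correct in outline and follows the same skeleton as the paper's proof: modulate one fixed good spherical codebook by i.i.d.\ Haar-uniform rotations, use the rotation trick so that the key-averaged error for a fixed $\mbf{s}$ becomes an average of i.i.d.\ $[0,1]$-valued terms with exponentially small mean, apply a Chernoff bound at level $t=\zeta n/K(n)$ (your choice $\theta=nE'/2$ produces the same $n^2$-type exponent that the paper gets from $Kt\log\delta^{-1}$ with $\delta=e^{-nE(R)}$, and you use the two hypotheses on $K(n)$ exactly as the paper does), union bound over messages and an exponential-size discretization of the jammer set, and then extend from the discretization to all $\mbf{s}$. Where you diverge is in the last step: the paper nets only the sphere $\norm{\mbf{s}}^2=n\scostb$ with a rate--distortion covering, extends to arbitrary sphere points by exploiting a multiplicative power slack $(1+\nu)$ built into the code design together with Cauchy--Schwartz and the pairwise-distance bound \eqref{eq:pairwisedist}, and handles the interior of the ball with a separate triangle-inequality monotonicity argument; you instead net the whole ball and transfer the error through an eroded (margin-$\alpha\sqrt{n}$) decoder, which neatly removes the need for the monotonicity step. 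The one item you defer --- that a single base codebook can be chosen whose margin decoder keeps a positive exponent uniformly over all jammer radii up to $\sqrt{n\scostb}$ --- is precisely the content of the paper's extension argument, and it is discharged the same way: since $\norm{\mbf{s}}\approx\sqrt{n\scostb}$, an additive margin $\alpha\sqrt{n}$ is equivalent to a multiplicative inflation of the jammer power, $R<C_r$ gives room to design $\mc{B}$ for the inflated power, and \eqref{eq:pairwisedist} guarantees the margin is a vanishing fraction of the codeword separation $\mu\sqrt{n}$; so the gap you flag is real but fillable by the same geometry the paper uses, and with it supplied your argument is a complete (and in the interior-of-the-ball handling slightly cleaner) proof.
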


\begin{proof}
  Fix a rate $R < C_r$.  We will suppress the
  dependence of $K(n)$ on $n$ in the proof.  We need to show that for $n$
  sufficiently large, there exists a codebook $\mc{B}$ and $K$ unitary
  matrices $\{U_k\}$ such that the probability of error is bounded for
  any choice of $\mbf{s}$.  To do this we  first show that if
  $\mbf{s}$ lies in a dense subset of the $\sqrt{n\scostb}$ sphere,
  then the event that the average error for $K$ randomly chosen
  matrices $\{U_k\}$ is too large has probability exponentially small
  in $K$.  Therefore we can choose a collection $\{U_k\}$ that satisfies
  the probability of error bound for any $\mbf{s}$.
  
  Consider the codebook $\mc{B}$ of $N$ vectors from the sphere of 
  radius $\sqrt{n\xcostb}$.  The expected performance of
  this code is good for an additive noise channel with noise $\mbf{V} + \mbf{W}$,
  where $\mbf{V}$ is 
  distributed uniformly on the sphere of radius $\sqrt{n \scostb}$.
  That is, for any $\delta > 0$ there exists an $n$ sufficiently large
  such that
	\begin{align}
	\max_{i \in [N]} \expe_{\mbf{V},\mbf{W}}\left[ \stdmaxerr(i,\mbf{V}) \right] 
	< \exp( -n E(R)) \stackrel{\Delta}{=} \delta.
	\label{deltadef}
	\end{align}

Suppose that we sample $K$ points $\mbf{V}_1, \mbf{V}_2, \ldots,
\mbf{V}_K$ independently from the distribution of $\mbf{V}$.  Then
standard concentration bounds show that 
\begin{align}
  \prob_{\mbf{V},\mbf{W}} \left( \frac{1}{K} \sum_{k=1}^{K} \stdmaxerr(i,\mbf{V}_k) \ge t \right) 
	& \nonumber \\
	&\hspace{-1in}
	\le \exp\left(-K ( t \log \delta^{-1} - h_b(t) \log 2)\right).
  \label{expbnd}
\end{align}
A union bound over all $i \in [N]$ shows 
\begin{align}
  \prob_{\mbf{V}_n,\mbf{w}} \left( \bigcup_{i \in [N]}
  	\left\{
	\frac{1}{K} \sum_{k=1}^{K} \stdmaxerr(i,\mbf{V}_k) \ge t 
	\right \}
	\right) & \nonumber \\
  &\hspace{-2in}
  \le \exp\left(-K ( t \log \delta^{-1} - h_b(t) \log 2) + \log N\right).
  \label{eq:expbnd2}
\end{align}
Thus the probability that the collection of points $\{\mbf{v}_m\}$ induces an error probability that exceeds $t$ is exponentially small in $K$.

Now consider drawing $K$ unitary matrices $\{U_k : k = 1, 2, \ldots, K\}$ uniformly.  For a fixed $\mbf{v}$, the points
	\begin{align*}
	\mbf{V}_k = U_k^{-1} \mbf{v}
	\end{align*}
are uniform samples from $\mbf{V}$, and $\mbf{W}_k = U_k^{1} \mbf{W}$ are uniform samples from $\mc{N}(0,\sw I)$.  Let $\{ \mbf{a}_m : m = 1, 2, \ldots M\}$ be a set of vectors on the sphere of radius $\sqrt{n \scostb}$.  Another union bound yields the following:
	\begin{align}
	\prob\left( \bigcup_{m=1}^{M} \bigcup_{i \in [N]}
  	\left\{
	\frac{1}{K} \sum_{k=1}^{K} \stdmaxerr(i, U_{k}^{-1} \mbf{a}_m) \ge t 
	\right \}
	\right) 
	& \nonumber \\
	& \hspace{-2.3in}
	\le 
	\exp\left(-K ( t \log \delta^{-1} - h_b(t) \log 2) + \log M + \log N \right).
	\label{unionbnd}
	\end{align}

Results of Wyner \cite{Wyner:67packing} and Lapidoth \cite{Lapidoth:97mismatch} show that there exists a collection of $\exp(n (\rho + \epsilon))$ points on the $\sqrt{n \scostb}$-sphere
such that any point on the $\sqrt{n\scostb}$-sphere is at most a
distance $\eta$ from one of the points, where $\eta$ and $\rho$ are
related by $\rho = (1/2) \log\left(\scostb/\eta^2 \right)$.
Choose $M = \exp(n (\rho + \epsilon))$ and let
$\{\mbf{a}_m\}$ be the corresponding rate-distortion codebook.  The bound (\ref{unionbnd}) implies
	\begin{align}
	\prob\left( \bigcup_{m=1}^{M} \bigcup_{i \in [N]}
  	\left\{
	\frac{1}{K} \sum_{k=1}^{K} \stdmaxerr(i, U_{k}^{-1} \mbf{a}_m) \ge t 
	\right \}
	\right) 
	& \nonumber \\
	& \hspace{-2.3in}
	\le
	\exp\left(-K ( t \log \delta^{-1} - h_b(t) \log 2) + n (\rho + R + \epsilon ) \right).
	\label{densebnd}
	\end{align}
If $K(n)/n \to \infty$ then the probability that the error is
smaller than $t$ for the $M$ points $\{\mbf{a}_m\}$ can be made arbitrarily close to $1$ for any $\eta$.  The next step is to argue that we can extend the bound from $\mbf{s} \in \{\mbf{a}_m\}$ to all $\mbf{s}$.

Because $R < C_r(\xcostb,\scostb,\sw)$, for a sufficiently small constant $\nu$ we have 
	\begin{align*}
	R < \frac{1}{2} \log\left( 1 + \frac{\xcostb}{(1 + \nu)^2 \scostb + \sw} \right), 
	\end{align*}
for some sufficiently small constant $\nu$.  That is, we can choose our code to have small error probability for noise of variance $(1 + \nu)^2 \scostb + \sw$.  The bound in (\ref{densebnd}) shows that for each message $i$ there is a set $\mc{K}_i$ of at least $(1-t) K$ keys for which  
	\begin{align*}
	\norm{ \mbf{x}_i - \mbf{x}_j + (1 + \nu) U_k^{-1} \mbf{a}_m } 
	> (1 + \nu) \| \mbf{a}_m \| \qquad \forall j \ne i.
	\end{align*}
Equivalently, we can write
	\begin{align}
	2 \ip{ \mbf{x}_j - \mbf{x}_i }{ U_k^{-1} \mbf{a}_m }
	< \frac{1}{1 + \nu} \norm{ \mbf{x}_i - \mbf{x}_j }^2.
	\label{eq:ipmindistbound}
	\end{align}
Now suppose the jammer's input is $\mbf{s}$.  For each $k \in \mc{K}_i$, the rate-distortion codebook property guarantees that $\mbf{s}$ is only a distance $\eta \sqrt{n}$ from some point $U_k^{-1} \mbf{a}_m$.  We would like to prove a bound like (\ref{eq:ipmindistbound}) for all $\mbf{s}$.  To start:
	\begin{align*}
	2 \ip{ \mbf{x}_j - \mbf{x}_i }{ \mbf{s} }
	&= 2 \ip{ \mbf{x}_j - \mbf{x}_i }{ \mbf{s} - U_k^{-1} \mbf{a}_m } 
		\\
		& \hspace{1in} + 2 \ip{ \mbf{x}_j - \mbf{x}_i }{ U_k^{-1} \mbf{a}_m } \\
	&< 2 \cdot \norm{ \mbf{x}_i - \mbf{x}_j } \cdot \eta \sqrt{n}
		+ \frac{1}{1 + \nu} \norm{ \mbf{x}_i - \mbf{x}_j }^2 \\
	&< \left( 2 \frac{\eta}{\mu} + \frac{1}{1 + \nu} \right)
		\norm{ \mbf{x}_i - \mbf{x}_j }^2,
	\end{align*}
where we used (\ref{eq:ipmindistbound}), the Cauchy-Schwartz inequality, the distortion bound for $\{\mbf{a}_m\}$, and and (\ref{eq:pairwisedist}).  Now choose $\eta$ sufficiently small so that 
	\begin{align}
	2 \ip{ \mbf{x}_j - \mbf{x}_i }{ \mbf{s} }
	< \norm{ \mbf{x}_i - \mbf{x}_j }^2.
	\end{align}
This shows that the minimum distance decoding rule results in a small error probability for all $\mbf{s}$ with $\norm{\mbf{s}}^2 = n \scostb$.
	
The last thing we need is to show that the average error probability
is monotonic in the length of the jamming vector for a given
direction.  Suppose that there was an error for $\mbf{s}$ but now the
jammer inputs $(1 + b) \mbf{s}$.  Then there is an error if
	\begin{align*}
	\norm{ \mbf{x}_i - \mbf{x}_k + (1 + b) \mbf{s} } \le (1 + b) \norm{\mbf{s}}.
	\end{align*}
\noindent But we can easily bound this using the triangle inequality:
	\begin{align*}
	\norm{ \mbf{x}_i - \mbf{x}_k + (1 + b) \mbf{s} } &\le \norm{ \mbf{x}_i - \mbf{x}_k + \mbf{s}} + \norm{b \mbf{s}} \\
	&\le (1 + b) \norm{\mbf{s}}.
	\end{align*}
\noindent Thus the error probability can only become smaller for 
shorter jamming inputs $\mbf{s}$.  

We have shown that for any $t > 0$ there is an $n$ sufficiently large such that with high probability, choosing a random set of $K$ unitary matrices $\{\mbf{U}_k\}$ results in a randomized code whose error can be made smaller than $t$.  
Therefore such a randomized code exists.
\end{proof}

\subsection{An application to degraded broadcast}

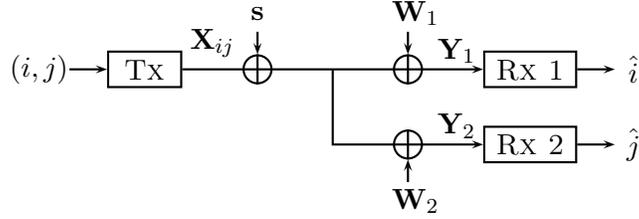
\begin{figure}
\begin{center}
\psset{xunit=0.05cm,yunit=0.05cm,runit=0.05cm}
\begin{pspicture}(-5,-10)(170,40)
\rput(2,30){$(i,j)$}
\psline{->}(10,30)(20,30)
\psframe(20,25)(40,35)
\rput(30,30){\textsc{Tx}}
\rput(48,37){$\mbf{X}_{ij}$}

\pscircle(60,30){4}
\psline(60,26)(60,34)
\psline{->}(60,40)(60,34)
\rput(60,45){$\mbf{s}$}

\psline{->}(40,30)(120,30)
\psline{->}(80,30)(80,10)(120,10)

\pscircle(100,30){4}
\psline(100,26)(100,34)
\psline{->}(100,40)(100,34)
\rput(102,45){$\mbf{W}_1$}

\pscircle(100,10){4}
\psline(100,6)(100,14)
\psline{->}(100,0)(100,6)
\rput(102,-5){$\mbf{W}_2$}

\rput(113,35){$\mbf{Y}_1$}
\psframe(120,25)(145,35)
\rput(132.5,30){\textsc{Rx 1}}
\psline{->}(145,30)(155,30)
\rput(160,30){$\hat{i}$}

\rput(113,15){$\mbf{Y}_2$}
\psframe(120,5)(145,15)
\rput(132.5,10){\textsc{Rx 2}}
\psline{->}(145,10)(155,10)
\rput(160,10){$\hat{j}$}

\end{pspicture}
\caption{The arbitrarily varying degraded Gaussian broadcast channel.  The jammer is shared between both receivers.   Since we assume the noise $\mbf{w}_2$ has higher variance that $\mbf{w}_1$, we call receiver $1$ the ``strong'' user and receiver 2 the ``weak'' user.  \label{fig:avbcfig}}
\end{center}
\end{figure}

We can apply Theorem \ref{thm:gavc_noisy} to the a degraded broadcast
channel with a common jammer.  In this setting we consider
deterministic coding.  We show that one receiver can use the codeword of
the other receiver to enable randomized coding for its message.  The
channel model is given by
	\begin{align*}
	\mbf{Y}_1 &= \mbf{x} + \mbf{s} + \mbf{W}_1 \\
	\mbf{Y}_2 &= \mbf{x} + \mbf{s} + \mbf{W}_2,
	\end{align*}
where $\mbf{W}_1$ is iid Gaussian with variance $\sigma_1^2$, $\mbf{W}_2$ is iid Gaussian with variance $\sigma_2^2$, and $\sigma_1^2 < \sigma_2^2$.  The channel is shown in Figure \ref{fig:avbcfig}.  We call receiver $1$ the strong user and receiver $2$ the weak user.

An $(n, N_1, N_2)$ deterministic code with power constraint $\xcostb$ for this channel is a tuple of maps $(\denc,\ddec_1,\ddec_2)$, where 
	\begin{align*}
	\denc &: [N_1] \times [N_2] \to \mathbb{R}^n \\
	\ddec_1 &: \mathbb{R}^n \to [N_1] \\
	\ddec_2 &: \mathbb{R}^n \to [N_2],
	\end{align*}
and $\norm{ \denc(i,j) }^2 \le n \xcostb$ for all $(i,j)$.  The map $\denc$ is the encoder and the maps $\ddec_1$ and $\ddec_2$ are the decoders for users $1$ and $2$.  The average probability of error for the code under state constraint $\scostb$ is
	\begin{align*}
	\avgerr &= \max_{\mbf{s} \in \mc{S}^n(\scostb)}
		\frac{1}{N_1 N_2} \sum_{i=1}^{N_1} \sum_{j=1}^{N_2} 
		\prob \big( 
			\ddec_1( 
				\denc(i,j) + \mbf{s} + \mbf{W}_1) \ne i, \\
		&\hspace{1.8in}
				\ddec_2( \denc(i,j) + \mbf{s} + \mbf{W}_2) \ne j
			\big).
	\end{align*}
The error is averaged over the messages to both users.  We say the pair of rates $(R_1, R_2)$ is achievable if there exists a sequence of $(n, \exp(nR_1), \exp(n R_2))$ deterministic codes whose average error goes to $0$ as $n \to \infty$.  The capacity region is the union of achievable rates.

The discrete arbitrarily varying broadcast channel without constraints was first studied by Jahn \cite{Jahn:81multiuser}, who proved an achievable rate region for randomized coding and then applied the elimination technique \cite{Ahlswede:78elimination} to derandomize the code.  This approach does not work in general for constrained AVCs.  Discrete constrained AVCs with degraded message sets were studied by Hof and Bross \cite{HofB:06avbcast}.  Their achievable strategy requires a number of non-symmetrizability conditions which are analogous to our result in Theorem \ref{thm:gavdbc}.  %

We build a superposition code \cite{Cover:72bcast} based on our rotated codebook construction.  The strong user can treat the message for the weak user as a random key in a randomized code.  The codebook for user 2 is a deterministic code (``cloud centers'') with power $\alpha \Gamma$ and the codebook for user 1 is a randomized code with power $(1 - \alpha) \Gamma$, where the randomization is over the codewords of user 2.  From Theorem \ref{thm:gavc_noisy} we can see that the randomization provided by user 2's message is sufficient for user 1 to achieve the randomized coding capacity. This scheme is limited \cite{CsiszarN:91gavc} to those $\alpha$ for which $\alpha \xcostb > \scostb$.

\begin{theorem}		\label{thm:gavdbc}
  If $\scostb \ge \xcostb$ then the deterministic coding capacity region of the arbitrarily varying
  degraded Gaussian broadcast channel is the empty set.  If $\scostb \le \xcostb$ then for $\alpha \in (\scostb/\xcostb, 1]$, the rates $(R_1, R_2)$ satisfying the following inequalities are
  achievable with deterministic codes for the arbitrarily varying
  degraded Gaussian broadcast channel under average probability of
  error:
	\begin{align}
	R_1 &< \frac{1}{2} \log\left(
		1 + \frac{(1 - \alpha) \xcostb
			}{
			\scostb + \sigma_1^2}
			\right) \label{ratestrong} \\
	R_2 &< \frac{1}{2} \log\left(
		1 + \frac{\alpha \xcostb
			}{
			(1-\alpha) \xcostb + \scostb + \sigma_2^2}
			\right) \label{rateweak} \\
	R_1 + R_2 
		&< \frac{1}{2} \log\left(
			1 + \frac{\xcostb - \scostb
			}{
			\scostb + \sigma_1^2}
			\right)
			+
			\frac{1}{2} \log\left(
		1 + \frac{\scostb
			}{
			\xcostb + \sigma_2^2}
			\right).
			\label{eq:ratesplit}
	\end{align}
\end{theorem}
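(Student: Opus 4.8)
The plan is to treat the two halves separately. For the impossibility statement, assume $\scostb \ge \xcostb$ and fix an arbitrary deterministic $(n,N_1,N_2)$ code. Every codeword obeys $\norm{\denc(i,j)}^2 \le n\xcostb \le n\scostb$, so for any message pair $(i',j')$ the jammer may legitimately transmit $\mbf{s} = \denc(i',j') \in \mc{S}^n(\scostb)$, and then $\mbf{Y}_1 = \denc(i,j) + \denc(i',j') + \mbf{W}_1$ is unchanged when $(i,j)$ and $(i',j')$ are swapped. Consequently no decoder can be reliable at once on the hypotheses ``true message $(i,j)$, jammer $(i',j')$'' and ``true message $(i',j')$, jammer $(i,j)$'' whenever $i \ne i'$; pairing hypotheses this way and averaging over all $(i,j)$ and $(i',j')$ produces an admissible $\mbf{s}$ under which receiver~$1$'s average error is at least $\tfrac12(1-1/N_1)$, and symmetrically an $\mbf{s}$ making receiver~$2$'s at least $\tfrac12(1-1/N_2)$. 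Hence the deterministic capacity region contains no rate pair with a positive component. (This is the GAVC symmetrization of \cite{CsiszarN:91gavc} applied marginally.) For the achievable region I would carry out the superposition construction described before the statement, in which the weak user's message is the common randomness that lets the strong user use a randomized code.

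Concretely, fix $\alpha \in (\scostb/\xcostb, 1]$ and rates $R_1$, $R_2$ strictly below the right-hand sides of (\ref{ratestrong}) and (\ref{rateweak}). Take a deterministic cloud-center codebook $\{\mbf{u}_j : j \in [N_2]\}$ of power about $\alpha\xcostb$, a single satellite codebook $\{\mbf{v}'_i : i \in [N_1]\}$ of power about $(1-\alpha)\xcostb$, and $K(n) = n^2$ unitary matrices $\{U_1,\dots,U_{K(n)}\}$; assign to cloud center $j$ a rotation index $\kappa(j) \in [K(n)]$ in round-robin fashion and set $\denc(i,j) = \mbf{u}_j + U_{\kappa(j)}\mbf{v}'_i$. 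The cross term $\ip{\mbf{u}_j}{U_{\kappa(j)}\mbf{v}'_i}$ is $O(\sqrt n)$ for a generic choice of the $U_k$, so $\norm{\denc(i,j)}^2 \le n\xcostb$ after an arbitrarily small back-off in the component powers and rates. The cloud codebook is obtained from the constrained-AVC coding theorem \cite{CsiszarN:91gavc}: because $\alpha\xcostb > \scostb$, for any rate below the right-hand side of (\ref{rateweak}) there is a deterministic code with vanishing average error against every $\mbf{s} \in \mc{S}^n(\scostb)$ on the channel whose additive disturbance is the satellite layer together with $\mbf{s}$ and the Gaussian noise -- the satellite of power $(1-\alpha)\xcostb$ being absorbed as spherical, effectively Gaussian, noise exactly as in Step~1 of the analysis preceding Theorem~\ref{thm:gavc_noisy}, and the adversarial $\mbf{s}$ being controlled since $\alpha\xcostb$ beats the jammer power; the same codebook can be taken good at both receivers' noise levels $\sigma_1^2$ and $\sigma_2^2$. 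The rotations $\{U_k\}$ are chosen by the argument proving Theorem~\ref{thm:gavc_noisy}: with $K(n) = n^2$ its hypotheses hold verbatim, and since the jammer does not know $j$ the active rotation $U_{\kappa(j)}$ is uniform on the $K(n)$ matrices, so a random choice of the $U_k$ gives, with high probability, a satellite code whose minimum-distance decoder has error $O(n/K(n)) = O(1/n)$ against every $\mbf{s}$, for any rate below the right-hand side of (\ref{ratestrong}).

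With the code fixed the decoders act sequentially. Receiver~$2$ decodes $j$ directly, treating $U_{\kappa(j)}\mbf{v}'_i + \mbf{s} + \mbf{W}_2$ as noise, with small error by the cloud property. Receiver~$1$ decodes $j$ under the same rule -- this succeeds with vanishing error because $\sigma_1^2 < \sigma_2^2$ makes receiver~$1$'s cloud rate strictly larger than receiver~$2$'s -- then subtracts $\mbf{u}_j$ and is left with $U_{\kappa(j)}\mbf{v}'_i + \mbf{s} + \mbf{W}_1$, which is exactly the randomized-coding channel the $U_k$ were designed for, so it recovers $i$ with error $O(1/n)$ averaged over the uniform $j$. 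A union bound over these three events, plus the monotonicity of the error in $\norm{\mbf{s}}$ established inside the proof of Theorem~\ref{thm:gavc_noisy} (which reduces the jammer to $\norm{\mbf{s}}^2 = n\scostb$), shows the overall average error tends to $0$, so (\ref{ratestrong}) and (\ref{rateweak}) are achievable. Inequality (\ref{eq:ratesplit}) is then automatic: writing $g = (1-\alpha)\xcostb \in [0,\xcostb-\scostb)$, the sum of the right-hand sides of (\ref{ratestrong}) and (\ref{rateweak}) equals $\tfrac12\log\!\left( \tfrac{g+\scostb+\sigma_1^2}{\scostb+\sigma_1^2}\cdot\tfrac{\xcostb+\scostb+\sigma_2^2}{g+\scostb+\sigma_2^2}\right)$, which is increasing in $g$ since $\sigma_1^2 < \sigma_2^2$ and is therefore strictly below its value at $g = \xcostb - \scostb$, namely the right-hand side of (\ref{eq:ratesplit}).

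The main obstacle is to build one code that withstands every admissible $\mbf{s}$ at both decoding stages simultaneously, and in particular to legitimise using the weak user's message $j$ -- which receiver~$1$ must itself decode, and only with positive (though vanishing) error -- as the shared key required by Theorem~\ref{thm:gavc_noisy}. The degradedness ordering $\sigma_1^2 < \sigma_2^2$ resolves this: receiver~$1$ decodes $j$ strictly more easily than the legitimate receiver~$2$, so the randomness is essentially free, while the constraint $\alpha > \scostb/\xcostb$ is precisely what keeps the deterministic cloud layer immune to symmetrization yet still leaves power $(1-\alpha)\xcostb$ for the randomized satellite layer.
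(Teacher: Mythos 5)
Your overall route is the paper's: symmetrization for the converse, and a superposition code in which the weak user's message supplies the common randomness that Theorem \ref{thm:gavc_noisy} needs for the strong user's layer; your observation that \eqref{eq:ratesplit} is implied by \eqref{ratestrong} and \eqref{rateweak} for $\alpha \in (\scostb/\xcostb,1]$ also suffices for the theorem as stated (the paper instead obtains the sum-rate bound by a rate-splitting remark). The genuine gap is in your verification of the transmitter's power constraint. You set $\denc(i,j) = \mbf{u}_j + U_{\kappa(j)}\mbf{v}'_i$ and claim the cross term $\ip{\mbf{u}_j}{U_{\kappa(j)}\mbf{v}'_i}$ is $O(\sqrt{n})$ for a generic choice of the $U_k$, so that an arbitrarily small back-off restores $\norm{\denc(i,j)}^2 \le n\xcostb$. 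That is true for any single pair $(i,j)$, but the constraint must hold for every one of the $N_1 N_2 = \exp(n(R_1+R_2))$ codewords. For a fixed direction $U_{\kappa(j)}^{-1}\mbf{u}_j$, the maximal cosine against the $\exp(nR_1)$ satellite codewords of a positive-rate spherical codebook is bounded away from zero (of order $\sqrt{1-e^{-2R_1}}$ for a random codebook), and no choice of the rotations can make one direction nearly orthogonal to exponentially many well-spread codewords, because the spherical-cap measure only survives a union bound over $\exp(nR_1)$ codewords once the cosine is of constant order. Hence the worst cross term is $\Theta(n)$, some codewords violate the power constraint by a constant factor, and a vanishing back-off cannot repair this; expurgation is also problematic since the offending pairs need not align with whole rows or columns of the product message set.

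This is precisely what the paper's construction is engineered to avoid: there the satellite codeword is $A_i U_i \mbf{v}_{ij}$, with $A_i$ an isometry of $\mathbb{R}^{n-1}$ onto the hyperplane orthogonal to the cloud center and the satellites of radius $\sqrt{(n-1)(1-\alpha)\xcostb}$, so the cloud and satellite powers add exactly and $\norm{\mbf{x}_{ij}}^2 \le n\xcostb$ holds deterministically for every message pair. Your argument becomes correct if you replace the plain additive superposition by this orthogonal embedding (equivalently, have the encoder project the rotated satellite codeword onto the orthogonal complement of $\mbf{u}_j$, absorbing the lost dimension in a negligible power and rate correction); the remaining steps --- Csisz\'{a}r--Narayan for the cloud layer using $\alpha\xcostb > \scostb$, Theorem \ref{thm:gavc_noisy} with the weak user's message (your round-robin key or the paper's per-cloud rotations) as the shared randomness, sequential decoding at the strong receiver exploiting $\sigma_1^2 < \sigma_2^2$, and the monotonicity in $(1-\alpha)\xcostb$ giving \eqref{eq:ratesplit} --- then go through essentially as you describe and as in the paper.
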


\begin{proof}
The converse follows from the converse for the standard AVC.  Since we are limited to deterministic codes, if $\scostb \ge \xcostb$ the jammer can choose a message pair $(i',j')$ and transmit $\denc(i',j')$ plus additional noise.

To show the achievable rate region, suppose that $\scostb < \xcostb$ and generate a codebook using Theorem \ref{thm:gavc_noisy} containing $N_2 = \exp(n R_2)$ codewords $\{\mbf{v}_j\}$ on the $\sqrt{n \alpha \xcostb}$-sphere.  For each $\mbf{u}_i$ generate $N_1 = \exp(n R_1)$ codewords $\{\mbf{v}_{ij}\}$ uniformly on the $\sqrt{(n-1)(1 - \alpha) \xcostb}$-sphere.  Let the overall codebook be:
	\begin{align*}
	\mbf{x}_{ij} = \mbf{u}_i + A_{i} U_i \mbf{v}_{ij},
	\end{align*}
where $A_i$ is an isometric mapping of $\mathbb{R}^{n-1}$ to the plane orthogonal to $\mbf{u}_i$ and $U_i$ is a random unitary transformation as in the construction of Theorem \ref{thm:gavc_noisy}.  The codebooks satisfy the the power constraint.

The weak decoder first decodes $\mbf{u}_i$, treating the signal $A_{i} U_i \mbf{v}_{ij}$ as additional noise.  From the results of Csisz\'{a}r and Narayan \cite{CsiszarN:91gavc} we know that the average probability of error can be made small if $\alpha \xcostb > \scostb $.  This gives the first rate bound.

The strong decoder replicates the first step of the weak user.  If message $i$ was decoded correctly, it can subtract out $\mbf{u}_i$ and the residual channel is identical to a GAVC with input power $(1 - \alpha) \xcostb$ using the codebook of Theorem \ref{thm:gavc_noisy}.  This gives us the second rate bound.

To see the sum-rate bound (\ref{eq:ratesplit}), note that the weak user can give up any part of its message to the strong user, which means that rate splitting between the points where $\alpha = \scostb/\xcostb$ and $\alpha = 0$ are also achievable.
\end{proof}

\begin{figure} 
\centering
\subfigure[Complete region]{
	\includegraphics[width=2.5in]{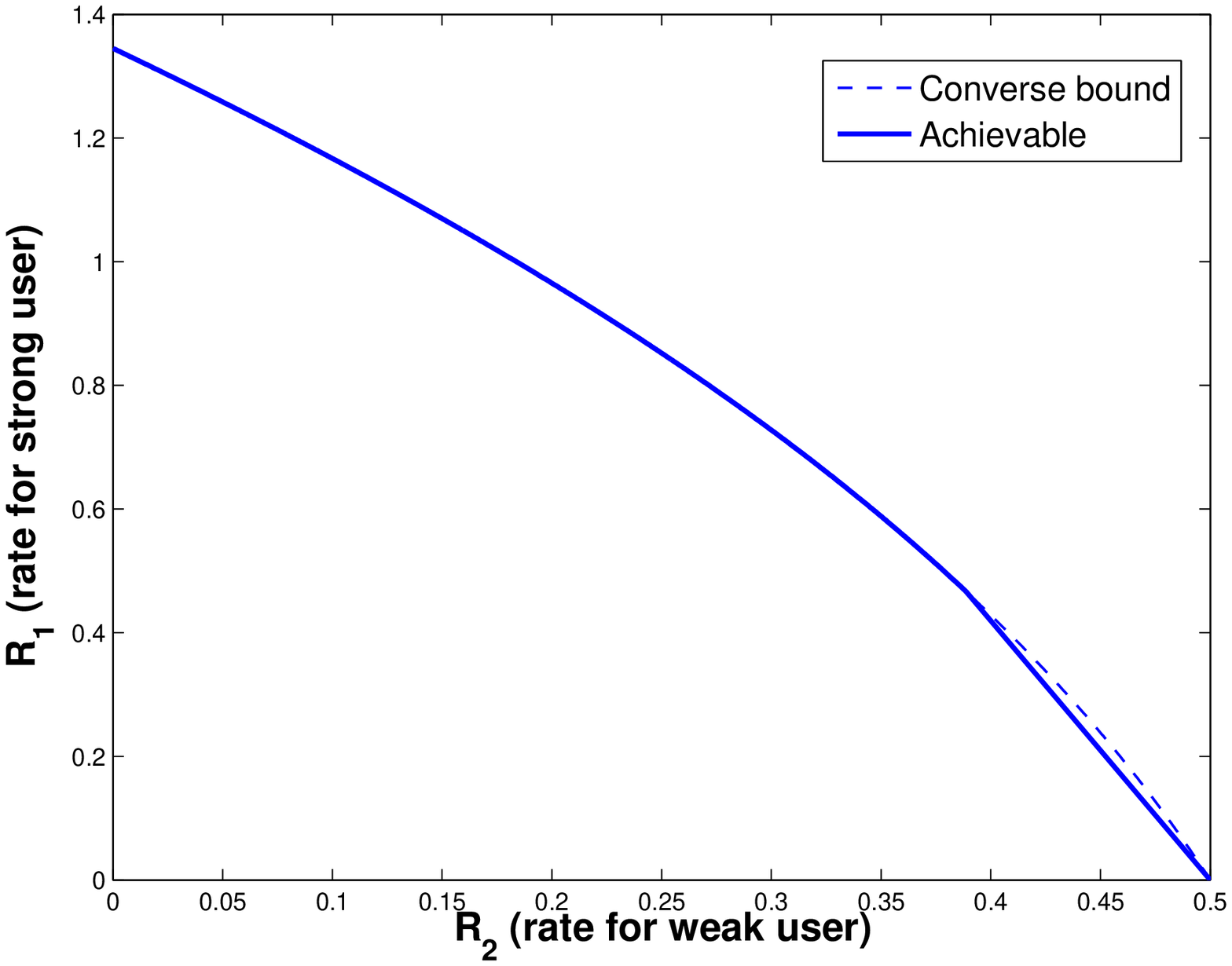} 
}
\subfigure[Detail of gap]{
	\includegraphics[width=2.5in]{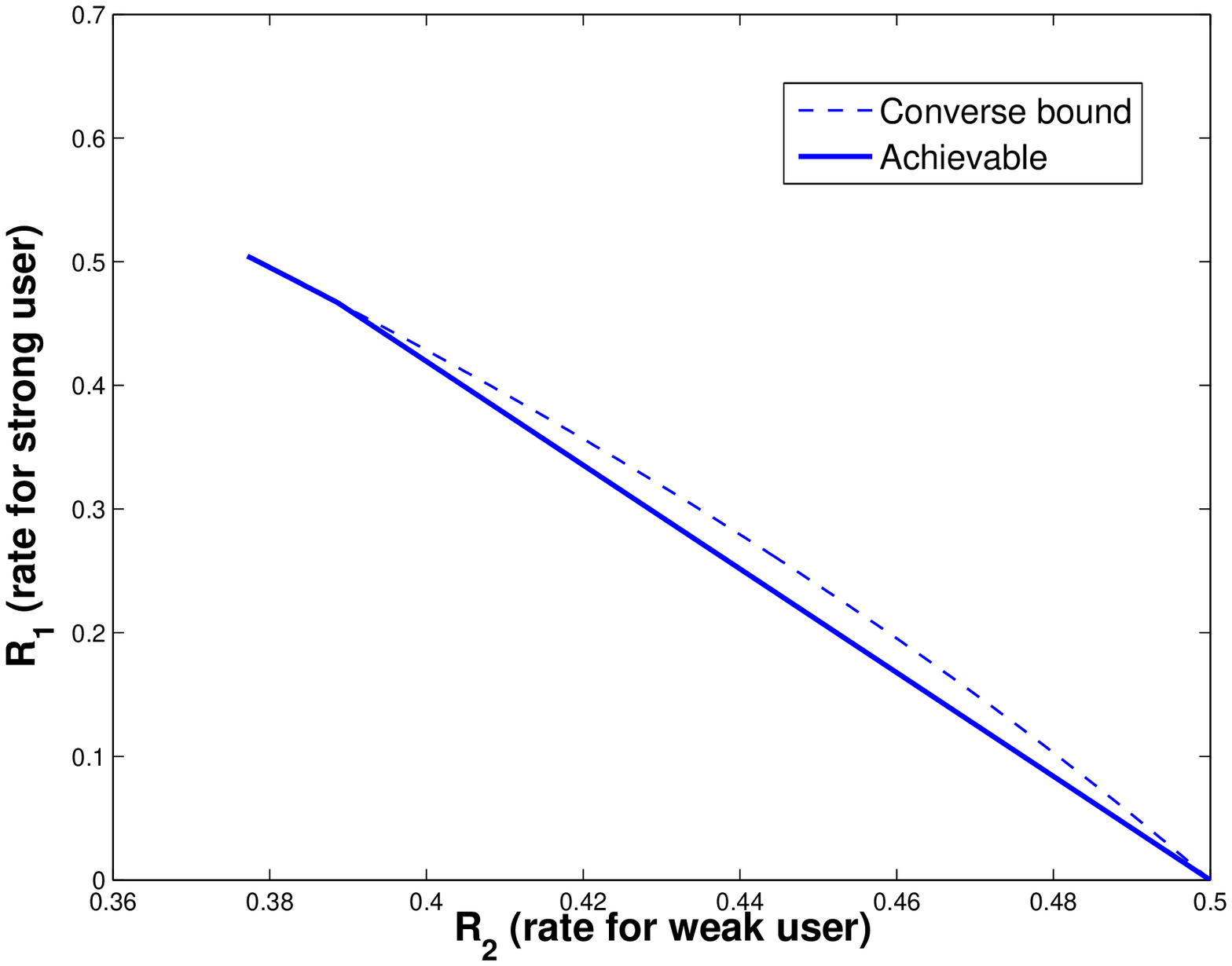} 
}
\caption{Achievable rates for the degraded broadcast Gaussian AVC with $\xcostb = 6$, $\scostb = 1$, $\sigma_1^2 = 0.1$, and $\sigma_2^2 = 5$. } 
\label{fig:dbc_bounds} 
\end{figure}

A plot of the achievable rate region is shown in Figure \ref{fig:dbc_bounds}.  This achievable region is tight for $\alpha > \scostb/\xcostb$ because the jammer could just add Gaussian noise to make the channel a degraded Gaussian broadcast channel \cite{Bergmans:73bcast,Bergmans:74converse,BergmansC:74bcast,Gallager:74bcast}.  The coding scheme above cannot be used in the regime where $\alpha \le \scostb/\xcostb$ because the jammer can symmetrize the $\{\mbf{u}_j\}$ codebook to the stronger user.  At present we do not know if new achievable strategies can achieve higher rates in this regime or if different converse arguments can show that rate splitting is optimal.

\section{Dirty paper coding for AVCs}

In this section we turn to a different AVC model in which there are two sources of interference, one of which is known to the transmitter.  The benefits of channel state information at the transmitter have been investigated by researchers since Shannon \cite{Shannon:58sideinfo}.  In one version of the problem, a time-varying state sequence is known non-causally at the transmitter, and the encoder can base its codebook on this known sequence.  The capacity for discrete channels with iid state sequences was found in the celebrated paper of Gel'fand and Pinsker \cite{GelfandP:80parameters}.  Costa \cite{Costa:83dirty} showed an analogous result for the Gaussian case and showed that the capacity is equal to that of a channel with no interference at all.  His strategy is called a ``dirty paper code.''  These results have found applications to inter-symbol interference (ISI) channels \cite{ErezSZ:05known}, watermarking \cite{CohenL:02watermark}, multi-antenna broadcasting \cite{WeingartenSS:06mimobc}, and models for ``cognitive radio'' \cite{DevroyeMT:06cognitive,JovicicV:06cognitive}.  We show that additional interference can increase the capacity of the GAVC when the encoder and decoder do not share any common randomness.

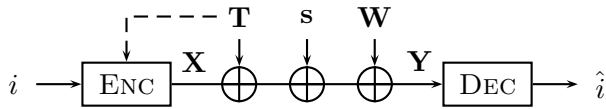
\begin{figure}
\begin{center}
\psset{xunit=0.06cm,yunit=0.06cm,runit=0.06cm}
\begin{pspicture}(0,20)(140,50)
\rput(5,30){$i$}
\psline{->}(10,30)(20,30)
\psframe(20,25)(40,35)
\rput(30,30){\textsc{Enc}}
\psline{->}(40,30)(100,30)
\rput(45,35){$\mathbf{X}$}

\rput(55,30){
	\begin{pspicture}(-4,-4)(4,4)
	\pscircle(0,0){4}
	\psline(-4,0)(4,0)
	\psline(0,-4)(0,4)
	\end{pspicture}
	}
\psline{->}(55,40)(55,34)
\rput(55,45){$\mathbf{T}$}

\psline[linestyle=dashed,arrows=->](50,45)(30,45)(30,35)

\rput(70,30){
	\begin{pspicture}(-4,-4)(4,4)
	\pscircle(0,0){4}
	\psline(-4,0)(4,0)
	\psline(0,-4)(0,4)
	\end{pspicture}
	}
\psline{->}(70,40)(70,34)
\rput(70,45){$\mathbf{s}$}

\rput(85,30){
	\begin{pspicture}(-4,-4)(4,4)
	\pscircle(0,0){4}
	\psline(-4,0)(4,0)
	\psline(0,-4)(0,4)
	\end{pspicture}
	}
\psline{->}(85,40)(85,34)
\rput(85,45){$\mathbf{W}$}

\rput(95,35){$\mathbf{Y}$}
\psframe(100,25)(120,35)
\rput(110,30){\textsc{Dec}}

\psline{->}(120,30)(130,30)
\rput(135,30){$\hat{i}$}
\end{pspicture}
\caption{The Gaussian arbitrarily varying channel with a known interference signal at the encoder.  \label{randavc}}
\end{center}
\end{figure}

\subsection{Channel model}

We will consider channels with inputs and outputs in $\mathbb{R}^n$ of the form
	\begin{align}
	\mbf{Y} = \mbf{X} + \mbf{T} + \mbf{s} + \mbf{W}.
	\label{eq:chandef}
	\end{align}
Here we take $\mbf{W} \sim \mc{N}(0, \sw I)$, $\norm{\mbf{s}}^2 \le \scostb n$, $\norm{\mbf{X}}^2 \le \xcostb n$, and $\mbf{T} \sim \mc{N}(0, \st I)$.  The channel input created by the transmitter is $\mbf{X}$, the vector $\mbf{T}$ is interference known to the transmitter,  $\mbf{s}$ is jamming interference, and $\mbf{W}$ is the independent noise at the receiver.  If randomized coding is allowed, then Costa's result implies that the capacity is equal to the AWGN capacity without $\mbf{T}$ and the jammer treated as additional noise.

An $(n,N)$ code with power constraint $\Gamma$ for this channel is a pair of functions $(\denc, \ddec)$, where $\denc : [N] \times \mathbb{R}^n \to \mathbb{R}^n$ and $\ddec : \mathbb{R}^n \to [N]$ and 
	\begin{align*}
	\norm{\denc( i, \mbf{T} )}^2 \le n \xcostb  \qquad a.s.. 
	\end{align*}
The average probability of error (over $\mbf{W}$ and $\mbf{T}$) for this code with jammer power $\scostb$ is given by
	\begin{align*}
	\bar{\varepsilon} = \max_{\mbf{s} \in \mc{S}^n(\scostb)}
	\frac{1}{N} \sum_{i=1}^{N} \prob \left( \ddec( \denc(i,\mbf{T}) + \mbf{T} + \mbf{s} + \mbf{W} ) \ne i \right).
	\end{align*}
	
A rate $R$ is \textit{achievable} if there exists a sequence of $(n, \lceil \exp(nR) \rceil)$ codes with $\bar{\varepsilon}_n \to 0$ as $n \to \infty$.  The \textit{capacity} $\bar{C}_d$ is defined to be the supremum of all achievable rates.  For $\st = 0$ this channel model reduces to the Gaussian AVC \cite{CsiszarN:91gavc} whose capacity is given in \eqref{eq:gaussian_cd}.

\subsection{Main result}

Our main result is an achievable rate region for the Gaussian AVC with partial state information at the encoder that is achievable using a generalized dirty-paper code.  For some parameter values the achievable rate is the capacity of the channel.  One way of interpreting this result is that the presence of extra interference known to the transmitter boosts its effective power and therefore lowers the power threshold for the standard Gaussian AVC.

\begin{theorem}
	\label{thm:dpc_thm}
Let
	\begin{align*}
	\mc{A}(\scostb) &= \left\{ (\boost,\corr) : 
		\frac{ 
		\left( \xcostb + (1 + \boost) \corr \sqrt{\xcostb \st} + \boost \st \right)^2
		}{ 		
			\xcostb + 2 \corr \boost \sqrt{\xcostb \st} + \boost^2 \st
		} 
		> \scostb
		\right\} \\
	P_U &= \xcostb + 2 \corr \boost \sqrt{\xcostb \st} + \boost^2 \st \\
	P_I &= \scostb + \sw \\
	P_Y &= \xcostb + 2\corr \sqrt{\xcostb \st} + \st + \scostb + \sw.
	\end{align*}
The following rate is achievable:
	\begin{align}
	R &= \max_{(\boost,\corr) \in \mc{A}(\scostb)} \frac{1}{2} \log \left( 
		\frac{ ( 1 - \corr^2 ) \xcostb P_Y
		}{ (1 - \boost)^2 ( 1 - \corr^2 )  \xcostb \st
			+ P_I P_U }
		\right).
	\label{eq:achR}
	\end{align}
\end{theorem}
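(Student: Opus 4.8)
The plan is to overlay a Gel'fand--Pinsker binning construction on the Gaussian--AVC coding argument of Csisz\'{a}r and Narayan \cite{CsiszarN:91gavc}, treating the jammer as worst-case additive noise of power $\scostb$. Fix $(\boost,\corr)\in\mc{A}(\scostb)$, let $(X,T)$ be jointly Gaussian with variances $\xcostb$ and $\st$ and correlation $\corr$, and set $U=X+\boost T$, so that $\expe[U^2]=P_U$ and $\theta:=\expe[(X+T)U]=\xcostb+(1+\boost)\corr\sqrt{\xcostb\st}+\boost\st$ is exactly the quantity squared in the numerator of the $\mc{A}(\scostb)$ condition. Generate $\exp(n(R+\mi{U}{T}+\epsilon))$ auxiliary codewords $\{\mbf{u}(l)\}$ roughly uniform on the sphere of radius $\sqrt{nP_U}$ and partition them into $N=\exp(nR)$ bins. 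To send message $i$ given the realization $\mbf{T}=\mbf{t}$, the encoder picks a $\mbf{u}(l)$ in bin $i$ whose empirical inner product with $\mbf{t}$ matches $\expe[UT]$ and whose norm is typical, and transmits $\mbf{x}=\mbf{u}(l)-\boost\mbf{t}$; a short computation gives $\norm{\mbf{x}}^2\approx n\xcostb$, and on the vanishingly rare event that typicality fails the encoder sends $\mbf{0}$, so the power constraint holds surely. Since each bin contains $\exp(n(\mi{U}{T}+\epsilon))$ codewords, a covering argument shows that a suitable $\mbf{u}(l)$ exists with probability approaching $1$.

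For the decoder, note that once message $i$ has been encoded the received word is $\mbf{Y}=\mbf{u}(l)+(1-\boost)\mbf{t}+\mbf{s}+\mbf{W}$; if the jammer vector $\mbf{s}$ is replaced by a Gaussian vector of power $\scostb$ this becomes the effective Gaussian channel $Y=X+T+W+G$ with $G\sim\mc{N}(0,\scostb)$, for which $\expe[Y^2]=P_Y$ and $\expe[YU]=\theta$. The decoder is the Csisz\'{a}r--Narayan minimum-distance-style decoder for this effective channel, parametrized by $\scostb$: it outputs the index of the bin containing the unique $\mbf{u}(l)$ that is ``$\scostb$-consistent'' with $\mbf{Y}$ --- roughly, $\norm{\mbf{Y}-(\theta/P_U)\mbf{u}(l)}^2\le n(P_Y-\theta^2/P_U)(1+\epsilon)$ together with the inner-product conditions of \cite{CsiszarN:91gavc} --- and for which no second codeword $\mbf{u}(l')$ is simultaneously consistent in a symmetrizing configuration.

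The error analysis then follows the same three parts as the proof of Theorem~\ref{thm:gavc_noisy} and the Csisz\'{a}r--Narayan argument. (i) The transmitted $\mbf{u}(l)$ is $\scostb$-consistent with $\mbf{Y}$ with probability approaching $1$, by the law of large numbers applied to $\mbf{W}$ and $\mbf{T}$, uniformly over admissible $\mbf{s}$. (ii) For a random codebook, with probability approaching $1$ no codeword outside bin $i$ is $\scostb$-consistent unless the jammer actively symmetrizes; a packing/second-moment estimate over the codebook shows this holds whenever $R+\mi{U}{T}<\mi{U}{Y}$, and using the conditional-variance identity $P_UP_Y-\theta^2=(1-\boost)^2(1-\corr^2)\xcostb\st+P_IP_U$ together with $\mi{U}{T}=\frac{1}{2}\log\frac{P_U}{(1-\corr^2)\xcostb}$ this is exactly $R$ below the expression in \eqref{eq:achR}. (iii) The symmetrizing case is excluded precisely when $(\boost,\corr)\in\mc{A}(\scostb)$: for a codebook in general position, a jammer that renders a wrong $\mbf{u}(l')$ as plausible as the true codeword would have to reproduce the projection of the effective transmitted signal $\mbf{X}+\mbf{T}$ onto the direction of $\mbf{u}(l')$, which carries power $\theta^2/P_U>\scostb$, exceeding the budget $n\scostb$. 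Combining (i)--(iii), the rate $R$ is achievable for every $(\boost,\corr)\in\mc{A}(\scostb)$ and every $R$ below \eqref{eq:achR}, and maximizing over $(\boost,\corr)$ proves the theorem.

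The main obstacle is step (iii): setting up the symmetrizability error event for the \emph{binned} dirty-paper decoder and showing that $\mc{A}(\scostb)$ is exactly the non-symmetrizability condition. The delicate point is that the jammer observes neither the message nor the interference realization $\mbf{T}$, so it can only try to imitate the distribution of a legitimate transmission $\mbf{u}(l')-\boost\mbf{T}'$; one must show that the least power required to do so, once the known-interference component has been folded into $U$, is controlled by $\theta^2/P_U$ rather than by $\xcostb$ or $P_U$ --- this is what makes the threshold in $\mc{A}(\scostb)$ what it is. A secondary technical issue is enforcing the power constraint exactly while surrendering only a vanishing amount of average error on the atypical-$\mbf{T}$ event, which is harmless because $\bar{\varepsilon}$ averages over $\mbf{T}$.
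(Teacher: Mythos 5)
Your proposal is correct and takes essentially the same route as the paper: a Gel'fand--Pinsker binned auxiliary codebook $U=X+\alpha T$ with correlation $\rho$ (bin rate $I(U;T)=\tfrac{1}{2}\log\frac{P_U}{(1-\rho^2)\Gamma}$, matching the paper's Lemma on successful encoding), quantization of a scaled $T$ within the bin, transmission of $U-\alpha T$, and decoding of $U$ over the effective Gaussian AVC, with the rate obtained from $P_U P_Y-\theta^2=(1-\alpha)^2(1-\rho^2)\Gamma\sigma_t^2+P_I P_U$ and the non-symmetrizability threshold given by the received power $\theta^2/P_U>\Lambda$, i.e.\ $(\alpha,\rho)\in\mathcal{A}(\Lambda)$. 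The only difference is presentational: where you propose re-deriving the packing and symmetrizability steps for the binned decoder (the obstacle you flag), the paper simply invokes the Csisz\'ar--Narayan deterministic-coding GAVC capacity for the $\mathbf{U}$ codebook, treating the component of $T$ orthogonal to $U$ as additional noise, at essentially the same level of detail.
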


In Costa's original paper, choosing $\corr = 0$ and $\boost = \optboost$, where
	\begin{align}
	\optboost = \frac{\xcostb}{ \xcostb + \scostb + \sw }
	\label{eq:dpc:optboost}
	\end{align}
gives an achievable rate of $(1/2) \log( 1 + \xcostb/(\scostb + \sw) )$.  We have
a simple corollary to show when our general scheme achieves capacity.

\begin{corollary}[Capacity achieving parameters]
If $\xcostb$, $\scostb$ and $\st$ are such that
	\begin{align}
	\scostb  < \frac{ (\xcostb + \optboost \st)^2 }{ \xcostb + \optboost^2 \st },
	\label{eq:st_thresh}
	\end{align}
then the capacity of the channel \eqref{eq:chandef} under deterministic coding is
	\[
	\bar{C}_d = \frac{1}{2} \log \left(  1 + \frac{\xcostb}{\scostb + \sw} \right),
	\]
and is achievable using the dirty paper code.
\label{cor:cap}
\end{corollary}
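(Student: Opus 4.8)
The plan is to prove the corollary by sandwiching $\bar{C}_d$ between a converse inherited from randomized coding and a single evaluation of the achievable region in Theorem~\ref{thm:dpc_thm}. For the converse, note that every deterministic code is a trivial randomized code, so $\bar{C}_d$ cannot exceed the randomized-coding capacity of the channel \eqref{eq:chandef}. The jammer may simply choose $\mbf{s}$ to be i.i.d.\ Gaussian of variance $\scostb$, reducing \eqref{eq:chandef} to a Gaussian channel with the state $\mbf{T}$ known non-causally at the encoder and total noise variance $\scostb + \sw$; by Costa's theorem the capacity of that channel is $\frac{1}{2}\log(1 + \xcostb/(\scostb+\sw))$. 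Hence $\bar{C}_d \le \frac{1}{2}\log(1 + \xcostb/(\scostb+\sw))$, irrespective of the hypothesis \eqref{eq:st_thresh}.

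For the matching lower bound I would specialize Theorem~\ref{thm:dpc_thm} to the Costa point $\corr = 0$, $\boost = \optboost$, with $\optboost$ as in \eqref{eq:dpc:optboost}. First one checks that this point lies in $\mc{A}(\scostb)$: setting $\corr = 0$ the defining inequality of $\mc{A}(\scostb)$ collapses to $(\xcostb + \boost\st)^2/(\xcostb + \boost^2\st) > \scostb$, and at $\boost = \optboost$ this is precisely the assumed inequality \eqref{eq:st_thresh}. Since the inequalities are both strict, $(\optboost,0)$ is genuinely in $\mc{A}(\scostb)$ under the stated condition, so the rate produced by \eqref{eq:achR} at $(\optboost,0)$ is achievable by the dirty-paper code, and it remains only to evaluate that expression.

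The remaining step is a short algebraic identity. With $\corr = 0$ one has $P_U = \xcostb + \optboost^2 \st$, $P_I = \scostb + \sw$, and $P_Y = \xcostb + \st + \scostb + \sw$; using $1 - \optboost = P_I/(\xcostb + P_I)$ one finds that the numerator of \eqref{eq:achR} equals $\xcostb(\xcostb + \st + P_I)$ while the denominator equals $\xcostb P_I(\xcostb + \st + P_I)/(\xcostb + P_I)$, so the common factor $\xcostb(\xcostb + \st + P_I)$ cancels and the rate is $\frac{1}{2}\log\!\big((\xcostb + P_I)/P_I\big) = \frac{1}{2}\log(1 + \xcostb/(\scostb + \sw))$. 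Combining this with the converse yields $\bar{C}_d = \frac{1}{2}\log(1 + \xcostb/(\scostb + \sw))$, attained by the dirty-paper code. I expect the only genuinely non-routine ingredient to be pinning down the converse, i.e.\ the claims that under randomized coding the power-limited jammer is no more harmful than Gaussian noise of the same power and that the known interference $\mbf{T}$ can be cancelled without loss; both follow from Costa's result together with the standard worst-case argument for power-limited additive interference, and everything else is the bookkeeping above.
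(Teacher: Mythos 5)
Your proposal is correct and follows essentially the same route the paper intends: specialize Theorem~\ref{thm:dpc_thm} to the Costa point $\corr=0$, $\boost=\optboost$, observe that condition \eqref{eq:st_thresh} is exactly membership of $(\optboost,0)$ in $\mc{A}(\scostb)$, verify the algebraic collapse of \eqref{eq:achR} to $\frac{1}{2}\log(1+\xcostb/(\scostb+\sw))$, and match it with the outer bound obtained by letting the jammer act as (power-$\scostb$) Gaussian noise, which is the paper's Lemma~\ref{lem:outerbnd}. The only detail worth keeping in mind is that the Gaussian jammer must be truncated to respect the almost-sure power constraint, which your appeal to the standard power-limited-interference argument already covers.
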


There is a threshold on $\scostb$ making the capacity equal to $0$.  If the jammer can simulate both the known interference and the transmitter's strategy, then it can symmetrize the channel.  The proof is straightforward and omitted.

\begin{lemma}[Na\"{i}ve outer bound]
We have
	\begin{align*}
	\bar{C}_d \le \left\{ 
		\begin{array}{ll}
		0 & 		\scostb > (\stone + \sqrt{\xcostb})^2 \\
		\frac{1}{2} \log \left( 1 + \frac{\xcostb}{\scostb + \sw} \right)
			& \textrm{otherwise}
		\end{array}
		\right. .
	\end{align*}
\label{lem:outerbnd}
\end{lemma}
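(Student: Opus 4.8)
The plan is to prove the two bounds separately. The second bound, $\bar{C}_d \le \frac12\log(1+\xcostb/(\scostb+\sw))$, is the softer one: against any deterministic code the jammer may transmit a negligibly truncated i.i.d.\ $\mc{N}(0,\scostb)$ sequence, so the channel \eqref{eq:chandef} is no more benign than the memoryless channel $\mbf{Y} = \mbf{X} + \mbf{T} + \mbf{G} + \mbf{W}$ with $\mbf{G}$ i.i.d.\ Gaussian of variance $\scostb$ independent of $(\mbf{T},\mbf{W})$. Revealing $\mbf{T}$ to the decoder for free only increases the capacity, and once $\mbf{T}$ is common knowledge it can be subtracted, leaving an AWGN channel of input power $\xcostb$ and noise power $\scostb+\sw$. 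Since $\avgerr$ dominates the corresponding error averaged over this enhanced channel, a Fano argument together with letting the truncation radius and the jammer variance approach $\scostb$ from below yields the bound. Equivalently, one may just invoke the fact that deterministic-coding capacity is at most randomized-coding capacity, which by Costa's result equals $\frac12\log(1+\xcostb/(\scostb+\sw))$.

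For the first bound I would show that $\scostb > (\stone+\sqrt{\xcostb})^2$ forces the channel to be symmetrizable, so that $\bar{C}_d=0$. The jammer draws an independent ``fake'' message $i'$ uniform on $[N]$ and an independent ``fake'' interference $\mbf{T}'$ from $\mc{N}(0,\st I)$ truncated to the ball $\{\norm{\mbf{T}'}\le(1+\delta)\stone\sqrt n\}$, and transmits $\mbf{s}=\denc(i',\mbf{T}')+\mbf{T}'$. Choosing $\delta$ small enough that $(\sqrt{\xcostb}+(1+\delta)\stone)^2<\scostb$, the triangle inequality and the encoder power constraint give $\norm{\mbf{s}}^2\le n(\sqrt{\xcostb}+(1+\delta)\stone)^2\le n\scostb$ almost surely, so $\mbf{s}$ is an admissible jamming signal, and the output becomes $\mbf{Y}=(\denc(i,\mbf{T})+\mbf{T})+(\denc(i',\mbf{T}')+\mbf{T}')+\mbf{W}$, a symmetric function of the pairs $(i,\mbf{T})$ and $(i',\mbf{T}')$.

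The crux is that under the average-error criterion the true message $i$ is uniform on $[N]$ and independent of $\mbf{T}$, so were $\mbf{T}'$ an untruncated Gaussian the pairs $(i,\mbf{T})$ and $(i',\mbf{T}')$ would be i.i.d.; then the joint law of $(\mbf{Y},i,i')$ is invariant under swapping $i\leftrightarrow i'$, the decoder output $\ddec(\mbf{Y})$ is equally likely to equal $i$ or $i'$, and — these events being disjoint unless $i=i'$ — the average error is at least $\frac12-\frac1{2N}$. Truncating $\mbf{T}'$ perturbs the joint law by only an exponentially small amount in total variation, so the error is still $\frac12-o(1)$, which tends to $\frac12$ for every $R>0$; hence no positive rate is achievable. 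I expect the one delicate point to be exactly this truncation: a Gaussian $\mbf{T}'$ is what makes the symmetry exact, but it violates the almost-sure power constraint, so one must check both that the truncated signal still fits in the budget (which is where $(\stone+\sqrt{\xcostb})^2<\scostb$ is used) and that the truncation changes the error probability by only $o(1)$.
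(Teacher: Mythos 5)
Your proposal is correct and follows essentially the approach the paper intends: the paper omits the proof but sketches exactly your symmetrization strategy (the jammer drawing a fake message and fake interference and transmitting $\denc(i',\mbf{T}')+\mbf{T}'$, which fits the power budget precisely when $\scostb > (\stone+\sqrt{\xcostb})^2$), while the second bound is the standard Gaussian-jammer/randomized-coding converse. Your handling of the truncation (feasibility of the constraint plus an exponentially small total-variation perturbation) is the right way to make the "straightforward" argument rigorous.
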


\subsection{Analysis}

Our codebook construction uses two auxiliary rates $R_U$ and $\Rbin$ and will depend on parameters $\boost$ and $\corr$ to be chosen later and positive constants $\rateeps$ and $\whpeps$ that can be made arbitrarily close to $0$. 
\begin{enumerate}
\item The encoder will generate an auxiliary codebook $\{ \mbf{U}_j \}$ of $\exp(n (R_U - \rateeps))$ vectors drawn uniformly from the $n$-sphere of power $P_U$.
\item These codewords are divided randomly into $\exp( n (R - 2 \rateeps))$ bins $\{\mc{B}_m\}$ such that each bin has $\exp( n (\Rbin + \rateeps))$ codewords.  We denote the $i$-th codeword of bin $\mc{B}_m$ by $\mbf{U}(m,i)$.
\item Given a message $m$ and an interference vector $\mbf{T}$, the encoder chooses the vector $\mbf{U}(m,i) \in \mc{B}_m$ that is closest to $\beta \mbf{T}$, where
	\begin{align*}
	\beta^2 = \frac{P_U}{\st} \left( 1 + \frac{ (1 - \corr^2) \xcostb }{ P_U - (1 - \corr^2) \xcostb } \right).
	\end{align*}
If no such $\mbf{U}(m,i)$ exists then we declare an encoder error.  The encoder transmits
	\begin{align*}
	\mbf{X} = \mbf{U}(m,i) - \boost \mbf{T}.
	\end{align*}
We will show that for $\whpeps > 0$, we can choose $n$ sufficiently large such that
	\begin{align}
	\norm{\mbf{X}}^2 
		& \le \xcostb
		\label{eq:inputbnd} \\
	\ip{ \mbf{U}(m,i) - \boost \mbf{T} }{ \mbf{T}} 
    		&\ge \corr \sqrt{\xcostb \st} - \whpeps.
		\label{eq:setcorr}
	\end{align}
\item The decoder first attempts to decode $\mbf{U}(m,i)$ out of the overall codebook $\{ \mbf{U}_j \}$ and produces an estimate $\mbf{U}(\hat{m},\hat{i})$.  It then outputs the estimated message index $\hat{m}$.
\end{enumerate}

We will analyze the performance of this coding strategy on an AVC with for general $\corr$ and $\boost$.

\begin{lemma}
\label{lem:good_enc}
Suppose
	\begin{align}
	\Rbin \ge \frac{1}{2} \log \left(
		\frac{ P_U }{ (1 - \corr^2) \xcostb }
		\right).
	\label{eq:dpc:succ_enc}
	\end{align}
Then for any $\rateeps > 0$ and  $\whpeps > 0$ in the code construction and any $\epsilon' > 0$, there exists an $n$ sufficiently large such that 
	\begin{align*}
	\prob\left( 
		\exists \mbf{U}(m,i) \in \mc{B}_m : \eqref{eq:inputbnd}, \eqref{eq:setcorr}\ \textrm{hold} \right)
    \ge
    1 - \epsilon'.
	\end{align*}
\end{lemma}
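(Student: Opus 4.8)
The plan is a covering (random-binning) argument. Throughout, read $\norm{\cdot}^2$ and $\ip{\cdot}{\cdot}$ in \eqref{eq:inputbnd}--\eqref{eq:setcorr} as normalized by $n$, and take $\boost>0$ (the generic case; $\boost=0$ is easier since then $\norm{\mbf{X}}^2/n=P_U=\xcostb$ automatically). Since every codeword lies on the $\sqrt{nP_U}$-sphere,
	\begin{align*}
	\tfrac1n\norm{\mbf{U}-\boost\mbf{T}}^2 &= P_U - \tfrac{2\boost}{n}\ip{\mbf{U}}{\mbf{T}} + \tfrac{\boost^2}{n}\norm{\mbf{T}}^2, \\
	\tfrac1n\ip{\mbf{U}-\boost\mbf{T}}{\mbf{T}} &= \tfrac1n\ip{\mbf{U}}{\mbf{T}} - \tfrac{\boost}{n}\norm{\mbf{T}}^2,
	\end{align*}
so both quantities move in the favorable direction as $\tfrac1n\ip{\mbf{U}}{\mbf{T}}$ increases; in particular (all codewords having equal norm) the codeword in $\mc{B}_m$ closest to $\beta\mbf{T}$ is exactly the one maximizing $\ip{\mbf{U}}{\mbf{T}}$. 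Hence it suffices to show that, with probability at least $1-\epsilon'$, some codeword of $\mc{B}_m$ has normalized correlation with $\mbf{T}$ at least a suitable target $\tau_0$, and that any such codeword meets \eqref{eq:inputbnd} and \eqref{eq:setcorr}.

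I would take $\tau_0 = \corr\sqrt{\xcostb\st} + \boost\st + \delta$ for a small $\delta>0$. From $P_U = \xcostb + 2\corr\boost\sqrt{\xcostb\st} + \boost^2\st$ one has the two identities
	\begin{align*}
	P_U - 2\boost\big(\corr\sqrt{\xcostb\st}+\boost\st\big) + \boost^2\st = \xcostb, \qquad
	P_U\st - \big(\corr\sqrt{\xcostb\st}+\boost\st\big)^2 = (1-\corr^2)\xcostb\st ,
	\end{align*}
the second being equivalent to $(\corr\sqrt{\xcostb}+\boost\sqrt{\st})^2 = P_U - (1-\corr^2)\xcostb$, which is the origin of the right side of \eqref{eq:dpc:succ_enc}. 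Let $A_\mbf{T} = \{\,|\tfrac1n\norm{\mbf{T}}^2 - \st|\le\xi\,\}$. On $A_\mbf{T}$, a codeword $\mbf{U}$ with $\tfrac1n\ip{\mbf{U}}{\mbf{T}}\ge\tau_0$ satisfies, by the decompositions above and the first identity, $\tfrac1n\norm{\mbf{U}-\boost\mbf{T}}^2 \le \xcostb - 2\boost\delta + \boost^2\xi$ and $\tfrac1n\ip{\mbf{U}-\boost\mbf{T}}{\mbf{T}} \ge \corr\sqrt{\xcostb\st} + \delta - \boost\xi$; choosing $\xi$ small relative to $\delta$ and $\whpeps$ makes these $\le\xcostb$ and $\ge\corr\sqrt{\xcostb\st}-\whpeps$, i.e.\ \eqref{eq:inputbnd} and \eqref{eq:setcorr} hold for that codeword.

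It remains to estimate the covering probability. Condition on $\mbf{T}\in A_\mbf{T}$. The codewords of $\mc{B}_m$ are i.i.d.\ uniform on the $\sqrt{nP_U}$-sphere (a bin is a random subset of the i.i.d.\ pool), so by the standard spherical-cap estimate \cite{Shannon:59gaussian,Wyner:67packing,Lapidoth:97mismatch}, for a single codeword $\mbf{U}$,
	\begin{align*}
	\prob\left( \tfrac1n\ip{\mbf{U}}{\mbf{T}} \ge \tau_0 \ \big|\ \mbf{T} \right) \ge \exp\left( -n\left( \tfrac12\log\frac{P_U}{(1-\corr^2)\xcostb} + c(\delta,\xi) + o(1) \right) \right) =: p ,
	\end{align*}
with $c(\delta,\xi)\to 0$ as $\delta,\xi\to 0$ (the lower bound $\tfrac1n\norm{\mbf{T}}^2\ge\st-\xi$ from $A_\mbf{T}$ keeps the cap from collapsing, and $\tau_0<\sqrt{P_U\st}$ for $\delta$ small since $\corr^2<1$). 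With $M' = \exp(n(\Rbin+\rateeps))$ codewords in the bin, $\prob(\text{no such codeword}\mid\mbf{T}) \le (1-p)^{M'} \le \exp(-pM')$. Using hypothesis \eqref{eq:dpc:succ_enc}, first fix $\delta,\xi$ so small that $c(\delta,\xi)<\rateeps/2$; then $pM' \ge \exp(n\rateeps/2 - o(n)) \to\infty$, so this conditional failure probability is below $\epsilon'/2$ for $n$ large. Since $\prob(A_\mbf{T}^c)\le\epsilon'/2$ for $n$ large by Gaussian concentration of $\tfrac1n\norm{\mbf{T}}^2$, a union bound gives total failure at most $\epsilon'$.

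The only genuinely nonroutine point is arranging that the covering exponent lands exactly on $\tfrac12\log\tfrac{P_U}{(1-\corr^2)\xcostb}$; that is the role of the two algebraic identities, together with the order in which constants are fixed --- the construction's fixed slack $\rateeps$ must absorb both the margin $\delta$ and the $O(\xi)$ fluctuation of $\tfrac1n\norm{\mbf{T}}^2$, so one fixes $\rateeps,\whpeps$, then $\delta$, then $\xi$, then $n$. One should also note the two-sided use of $A_\mbf{T}$: the lower bound on $\tfrac1n\norm{\mbf{T}}^2$ keeps the spherical cap large, while the upper bound is what turns a covering codeword into the estimates \eqref{eq:inputbnd}--\eqref{eq:setcorr}. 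The degenerate case $(\corr\sqrt{\xcostb}+\boost\sqrt{\st})^2 = 0$, in which $\beta$ and \eqref{eq:dpc:succ_enc} are vacuous, is excluded.
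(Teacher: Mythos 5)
Your proof is correct, and it reaches the same geometric conclusion as the paper by a noticeably more self-contained route. The paper's proof invokes the Gaussian rate-distortion theorem: it views $\mc{B}_m$ as a spherical quantizer for the scaled source $\beta \mbf{T}$, computes the test-channel mutual information $\frac{1}{2}\log\bigl(1 + P_U/D\bigr)$ with $D = \frac{P_U(1-\corr^2)\xcostb}{P_U - (1-\corr^2)\xcostb}$ to identify the threshold in \eqref{eq:dpc:succ_enc}, concludes that with high probability $\norm{\beta\mbf{T} - \mbf{U}(m,i)}^2 \le D - \epsilon$, and then converts that distortion bound into the correlation bound $\ip{\mbf{U}}{\mbf{T}} \gtrsim \corr\sqrt{\xcostb\st} + \boost\st$, from which \eqref{eq:setcorr} and the power bound \eqref{eq:inputbnd} follow. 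You instead target the correlation event directly: a spherical-cap estimate gives the per-codeword probability exponent $\frac{1}{2}\log\frac{P_U}{(1-\corr^2)\xcostb}$ (your two identities for $P_U$ are exactly what make the exponent land there), and independence of the $\exp(n(\Rbin+\rateeps))$ codewords in the bin plus the slack $\rateeps$ gives the whp covering; the scale factor $\beta$ plays no role in your argument since equal-norm codewords make nearest-to-$\beta\mbf{T}$ the same as maximal correlation with $\mbf{T}$. What your route buys is that the covering statement the paper asserts ("the codebook can compress $\beta\mbf{T}$ to distortion $D$" for the given source realization, with the specific spherical random codebook) is proved from first principles with explicit exponents and an explicit ordering of the constants $\rateeps, \delta, \xi, n$; what the paper's route buys is brevity and a transparent identification of the threshold rate as a rate-distortion function via the test channel. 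Both arguments share the same final step of translating a high-correlation codeword into \eqref{eq:inputbnd} and \eqref{eq:setcorr}, and both implicitly assume the nondegenerate regime ($\boost \ge 0$, $\corr^2 < 1$), which you flag explicitly.
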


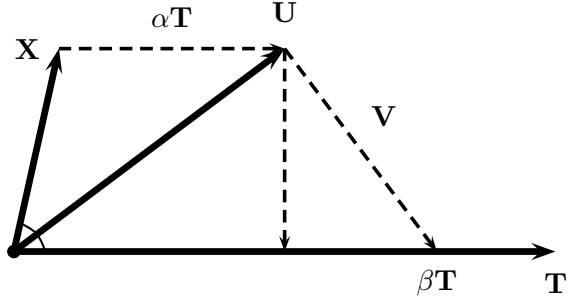
\begin{figure}
\begin{center}

\psset{unit=0.06cm}

\begin{pspicture}(-5,-10)(130,70)

\psline[linewidth=1.5,arrows=->](0,0)(60,45)
\rput(60,53){$\mbf{U}$}

\psline[linewidth=1.5,arrows=->](0,0)(120,0)
\rput(120,-7){$\mbf{T}$}

\psline[linewidth=0.8,linestyle=dashed,arrows=<-](60,45)(10,45)
\rput(35,52){$\alpha \mbf{T}$}

\psline[linewidth=1.5,arrows=->](0,0)(10,45)
\rput(3,45){$\mbf{X}$}

\psline[linewidth=0.8,linestyle=dashed,arrows=->](60,45)(93.75,0)
\rput(82,30){$\mbf{V}$}
\rput(93.75,-7){$\beta \mbf{T}$}

\psline[linewidth=0.8,linestyle=dashed,arrows=->](60,45)(60,0)

\pscurve(2,6)(5,4)(7,0)
\pscircle*(0,0){1.5}

\end{pspicture}
\caption{Geometric picture for dirty-paper encoding with general parameters.  \label{fig:dpc_cartoon}}
\end{center}
\end{figure}

\begin{proof}
Consider the picture in Figure \ref{fig:dpc_cartoon} and let
	\begin{align*}
	\beta^2 = \frac{P_U}{\st} \left( 1 + \frac{ (1 - \corr^2) \xcostb }{ P_U - (1 - \corr^2) \xcostb } \right).
	\end{align*}
We must show that a $\mbf{U}(m,i) \in \mc{B}_m$ exists satisfying \eqref{eq:setcorr}.  By the rate-distortion theorem for Gaussian sources, for $\Rbin$ satisfying \eqref{eq:dpc:succ_enc}, the codebook $\mc{B}_m$ chosen uniformly on the sphere of power $P_U$ can compress the source $\beta \mbf{T}$ to distortion 
	\begin{align*}
	D = \frac{ P_U (1 - \corr^2) \xcostb }{ P_U - (1 - \corr^2) \xcostb }.
	\end{align*}
To see this, consider the test channel $\beta \mbf{T} = \mbf{U} + \mbf{V}$, where $\mbf{V}$ is iid Gaussian with variance $D$.  The mutual information of this test channel is
	\begin{align*}
	\frac{1}{n} \mi{ \beta \mbf{T} }{ \mbf{U} }
	&=
		\frac{1}{2} \log \left( 1 + \frac{ P_U }{ D } \right) \\
	&= \frac{1}{2} \log \left(
		\frac{ P_U }{ (1 - \corr^2) \xcostb }
		\right).
	\end{align*}

We can choose $\mbf{U}(m,i)$ to be the codeword in $\mc{B}_m$ that corresponds to quantizing $\beta \mbf{T}$.  For any $\rateeps > 0$ the codebook $\mc{B}_m$ has rate greater than the rate distortion function for the source $\beta \mbf{T}$ with distortion $D$, so there exists an $\epsilon > 0$ such that with high probability,
	\begin{align*}
	\norm{ \beta \mbf{T} - \mbf{U}(m,i) }^2 \le D - \epsilon.
	\end{align*}
For any $\delta > 0$ we can choose $n$ sufficiently large such that with high probability we have
	\begin{align*}
	\ip{ \mbf{U} }{ \mbf{T} } 
		>
		\sqrt{ \frac{ P_U \st }{ P_U + D - \epsilon }	}
		- \delta.
	\end{align*}
If we choose $\delta$ small enough, we can find an $\eta > 0$ such that with high probability,
	\begin{align*}
	\ip{ \mbf{U} }{ \mbf{T} } 
		&>
		\sqrt{ \frac{ P_U \st }{ P_U + D } }
		+ \eta \\
	&= \sqrt{ \st (P_U - (1 - \corr^2) \xcostb) } + \eta \\
	&= \corr \sqrt{\xcostb \st} + \boost \st + \eta.
	\end{align*}
Therefore with high probability we have
	\begin{align*}
	\ip{ \mbf{U} - \boost \mbf{T} }{ \mbf{T} } > \corr \sqrt{\xcostb \st} + \eta/2.
	\end{align*}
Therefore we can choose $\whpeps$ to satisfy \eqref{eq:setcorr}.

The last thing to check is that we can satisfy the encoder power constraint with high probability.  Setting $\mbf{X} = \mbf{U} - \boost \mbf{T}$, we can see that for any $\delta' > 0$, with high probability
	\begin{align*}
	\norm{ \mbf{X} }^2 &\le P_U - 2 \boost \ip{ \mbf{U} }{ \mbf{T} } + \boost^2 \st + \delta'.
	\end{align*}
Choosing $\delta'$ sufficiently small yields $\norm{ \mbf{X} }^2 < \xcostb - \boost \eta$, which proves the result.
\end{proof}

\begin{proof}[Proof of Theorem \ref{thm:dpc_thm}]
We will choose the constants $\rateeps$ and $\whpeps$ according to Lemma \ref{lem:good_enc}.  The decoder must decode $\mbf{U}(m,i)$ from the received signal $\mbf{Y}$:
	\begin{align}
	\mbf{Y} &= \mbf{U}(i) + (1 - \boost) \mbf{T} + \mbf{s} + \mbf{W}.
	\label{eq:dpc:recsignal}
	\end{align}
The codebook $\{\mbf{U}_k\}$ can be used to achieve any rate below the deterministic coding capacity of the GAVC with input $\mbf{U}$, noise $\mbf{W} + (1 - \boost) \mbf{T}$, and jamming interference $\mbf{s}$, provided $P_U > \scostb$.  We can therefore choose $R_U$ to be equal to this capacity and for fixed $\boost$ and $\corr$ we calculate the capacity in what follows.  

We first find the power of the component of $\mbf{T}$ that is orthogonal to $\mbf{U}$:
	\begin{align}
	\mbf{T} = \frac{ \ip{\mbf{U}}{\mbf{T}} }{ \norm{\mbf{U}}^2 } \mbf{U} 
		+ \left( \mbf{T} - \frac{ \ip{\mbf{U}}{\mbf{T}} }{ \norm{\mbf{U}}^2 } \mbf{U} \right).
	\label{eq:Tdecompose}
	\end{align}
From \eqref{eq:setcorr} we see that for any $\delta > 0$ we can choose $n$ sufficiently large that
	\begin{align*}
	\prob \left(
		\ip{\mbf{U}}{\mbf{T}} 
		\ge \corr \sqrt{\xcostb \st} - \boost \st - 2 \whpeps
	\right)
	\ge 1 - \delta.
	\end{align*}
Let $P_T$ be the expected power in the second term of \eqref{eq:Tdecompose}.  Then for sufficiently large $n$ we also have
	\begin{align*}
	\prob \left(
		P_T \le \left( \st -  \frac{\left( \corr \sqrt{\xcostb \st} + \boost \st - 2 \whpeps \right)^2}{ P_U  }
		\right)
		\right)
	\ge 1 - \delta.
	\end{align*}
Some algebraic manipulation reveals that there is a constant $c$ such that 
	\begin{align*}
	\prob \left(
		P_T -
			\frac{(1 - \corr^2) \xcostb \st}{ P_U } 
		\le c \whpeps
		\right)
	\ge 1 - \delta.
	\end{align*}
In the GAVC \eqref{eq:dpc:recsignal} we define the equivalent noise variance as $P_I + (1 - \boost)^2 P_T$.  

In order for $\mbf{U}$ to be decodable, $R_U$ must be smaller than the
capacity of the corresponding AWGN channel:
	\begin{align*} 
	R_U < \frac{1}{2} \log \left( 
		\frac{ P_U P_Y }{ (1 - \boost)^2 (1 - \corr^2) \xcostb \st + P_I P_U  }
		\right).
	\end{align*}
Then $R_U - \Rbin$ gives the term to be maximized in \eqref{eq:achR}.
Note that in the presence of a jammer with power constraint $\scostb$, the $\mbf{U}$ codebook is only capacity achieving if the received power in the $\mbf{U}$ direction exceeds $\scostb$.  This received power is:
	\begin{align*}
	\gamma(\boost,\corr)
	&= \frac{ \left( \xcostb + (1 + \boost) \corr \sqrt{\xcostb \st} + \boost \st \right)^2
		}{ P_U }.
	\end{align*}
Thus for $(\boost,\corr) \in \mc{A}(\scostb)$ the the GAVC threshold for the $\mbf{U}$ codebook can be met and $\mbf{U}$ can be decoded.   Lemma \ref{lem:good_enc} shows that for large $n$ the encoding will succeed, so the probability of error can be made as small as we like.
\end{proof}

\subsection{Examples}

Figure \ref{fig:achRplot} shows an example of the achievable rate versus $\xcostb$.   The two circles show the thresholds given by Corollary \ref{cor:cap} and the threshold for the standard Gaussian AVC with deterministic coding and average error.  The presence of the known interference $\mbf{T}$ extends the capacity region relative to the standard AVC and achieves capacity for values of $\xcostb$ that are smaller than the jammer constraint $\scostb$.  Thus far we have been unable to improve the converse for the region in which DPC does not achieve capacity; it may be that a different coding scheme exploiting the interference $\mbf{T}$ can achieve higher rates in this regime.

\begin{figure}
\centering 
\includegraphics[width=3.0in]{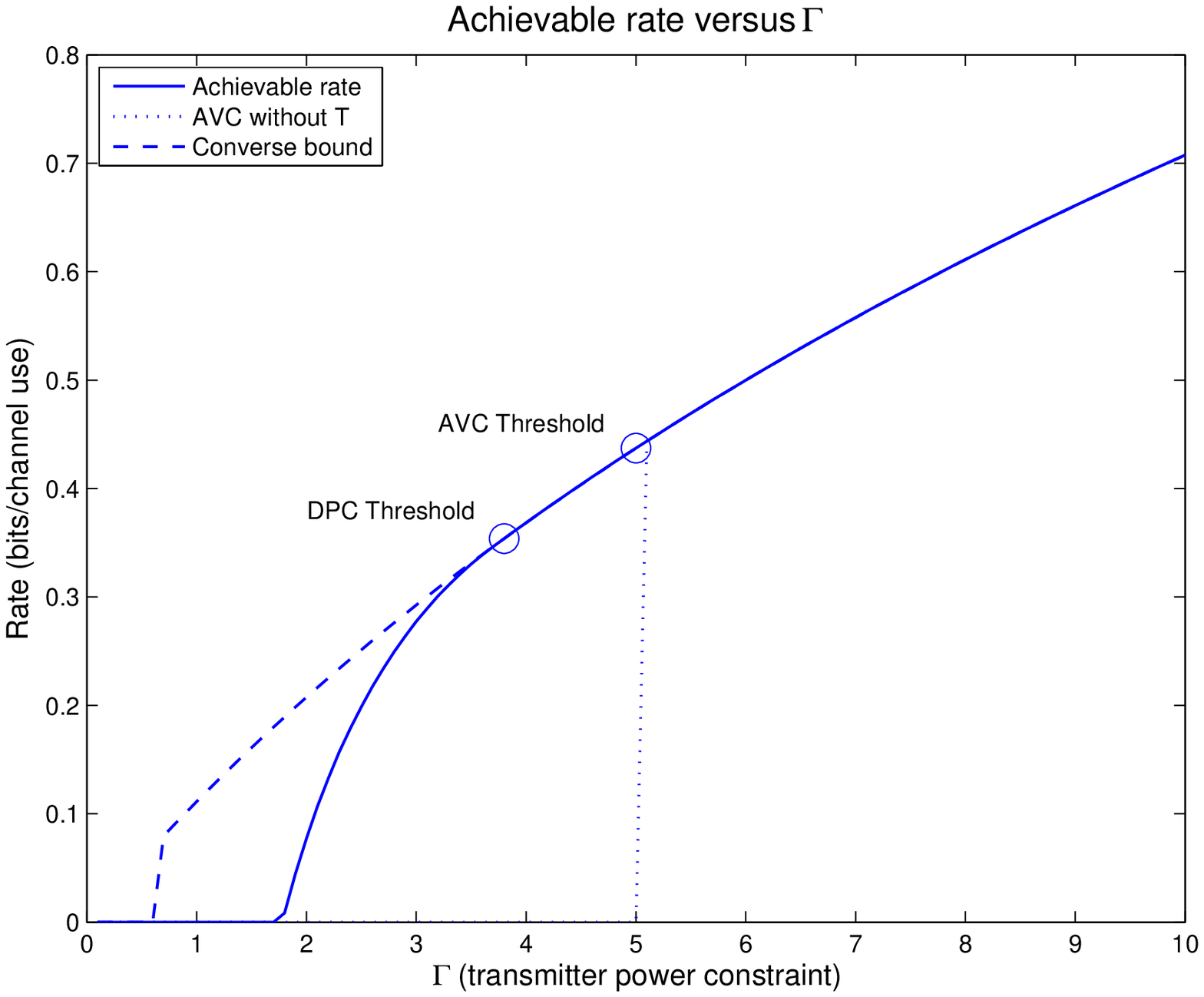}
\caption{Rates versus $\xcostb$ for $\scostb = 5$, $\st = 2$, and $\sw = 1$.  The solid line is the achievable rate and the dashed line is the outer bound.  The dotted line is the AVC capacity without the known interference signal.  The threshold for the dotted line is at $\xcostb = \scostb$ and the DPC threshold is given by \eqref{eq:st_thresh}.}
\label{fig:achRplot} 
\end{figure}

One application of dirty paper codes is in watermarking, in which an
encoder must encode a message $m$ in a given \textit{covertext}
(e.g. an image) which is modeled by an iid Gaussian sequence
$\mbf{T}$.  The encoder produces a \textit{stegotext} $\mbf{V} =
\denc(m, \mbf{T}) + \mbf{T}$ that satisfies a distortion constraint
$\norm{\mbf{V} - \mbf{T}} = \norm{\denc(m, \mbf{T})}^2 \le \xcostb$.
A limited class of attacks are \textit{additive attacks}, which take
the form of an additive signal $\mbf{s}$ that is independent of the
stegotext $\mbf{V}$ such that the receiver gets $\mbf{Y} = \mbf{s} +
\mbf{V}$.  In this model, if the encoder and decoder share common
randomness, then standard Gaussian AVC results imply that the highest
rate that can be transmitted via the stegotext is the dirty paper
coding capacity with $\mbf{s}$ equal to Gaussian noise.  We can call
this the randomized watermarking capacity.

By contrast, if there is no common randomness we can use Theorem
\ref{thm:dpc_thm} with the noise $\mbf{W}$ set to $0$ to find
achievable rates for this problem under deterministic coding; a
decoder should be able to read the watermark without sharing a secret
key with the encoder.  Because the encoder does not want to distort
the covertext by too much, an interesting regime for deterministic
watermarking is when the power $\st$ of the covertext is much higher
than the distortion limit $\xcostb$ of the encoder.  From
\eqref{eq:st_thresh} we can see that large $\st$ benefits the encoder
by increasing the effective power of the auxiliary codebook to beat
the jammer $\Lambda$.

By setting equality in \eqref{eq:st_thresh}, $\corr = 0$, and $\boost = \optboost$, we can solve for $\st$.  Let $\beta = \scostb/\xcostb$ be the ratio of the attack distortion to the watermark distortion.  A little algebra reveals:
	\begin{align*}
	\st = \xcostb \left( \frac{1}{2} \beta (5 + 4 \beta)^{1/2} 
		- \frac{1}{2} \beta
		- 1
		\right).
	\end{align*}
As we can see, the required $\st$ grows like $\beta^{3/2}$.   Therefore for a fixed watermark distortion, the cover text variance must increase like $\scostb^{3/2}$ in order to communicate at the randomized watermarking capacity.

\section{Rank-limited jammers}

In this section we study a model for multiple-input multiple-output (MIMO) Gaussian channels \cite{Telatar:99mimo} under limited jamming.  Hughes and Narayan \cite{HughesN:88vgavc} found the randomized coding capacity of an $M \times M$ MIMO AVC when the jammer has $M$ antennas as well.  A game theoretic model for this problem, carried out most fully by Baker and Chao \cite{BakerC:96game}, uses the mutual information as a payoff between one player who can choose a transmit covariance matrix and another who can choose the noise covariance matrix.  

\begin{figure}
\begin{center}
\psset{xunit=0.05cm,yunit=0.05cm,runit=0.05cm}
\begin{pspicture}(-10,0)(160,140)

\rput(75,30){
	\begin{pspicture}(0,0)(150,70)
	
		\rput(20,50){\textbf{Configuration 2}}
	
		\rput(30,20){
			\begin{pspicture}(0,0)(30,30)
			\psframe(0,0)(20,30)
			\rput(10,15){\textsc{Tx}}
			\psline(20,5)(25,5)(25,10)(22,13)(28,13)(25,10)
			\psline(20,20)(25,20)(25,25)(22,28)(28,28)(25,25)
			\end{pspicture}
		}

		\rput(120,20){
			\begin{pspicture}(0,0)(30,30)
			\psframe(10,0)(30,30)
			\rput(20,15){\textsc{Rx}}
			\psline(10,5)(5,5)(5,10)(2,13)(8,13)(5,10)
			\psline(10,20)(5,20)(5,25)(2,28)(8,28)(5,25)
			\end{pspicture}
		}
		
		\rput(65,15){
			\begin{pspicture}(0,0)(30,20)
			\psframe(0,0)(20,15)
			\rput(10,6.5){\textsc{Int}}
			\psline(20,5)(25,5)(25,10)(22,13)(28,13)(25,10)
			\end{pspicture}
			}

		\psline[linewidth=1,arrows=->](80,20)(95,30)(95,25)(105,32)
		\psline[linewidth=1,arrows=->](80,18)(95,11)(95,18)(105,15)
		
	\end{pspicture}
	}

\rput(75,90){
	\begin{pspicture}(0,0)(150,60)
	
		\rput(20,55){\textbf{Configuration 1}}
	
		\rput(30,30){
			\begin{pspicture}(0,0)(30,30)
			\psframe(0,0)(20,30)
			\rput(10,15){\textsc{Tx}}
			\psline(20,5)(25,5)(25,10)(22,13)(28,13)(25,10)
			\psline(20,20)(25,20)(25,25)(22,28)(28,28)(25,25)
			\end{pspicture}
		}

		\rput(120,30){
			\begin{pspicture}(0,0)(30,30)
			\psframe(10,0)(30,30)
			\rput(20,15){\textsc{Rx}}
			\psline(10,5)(5,5)(5,10)(2,13)(8,13)(5,10)
			\psline(10,20)(5,20)(5,25)(2,28)(8,28)(5,25)
			\end{pspicture}
		}
		
		\rput(75,60){
			\begin{pspicture}(0,0)(30,20)
			\psframe(0,0)(20,15)
			\rput(10,6.5){\textsc{Int}}
			\psline(20,5)(25,5)(25,10)(22,13)(28,13)(25,10)
			\end{pspicture}
			}

		\psline[linewidth=2,arrows=->](85,50)(95,43)(95,47)(105,40)
		\psline[linewidth=0.5,arrows=->,linestyle=dashed](85,45)(95,25)(95,35)(105,25)
		
	\end{pspicture}
	}

\psline[linewidth=3pt,linestyle=dashed](0,65)(150,65)

\end{pspicture}
\caption{Two different configurations for a system with a single-antenna interferer.  Because the location of the interferer is unknown, the subspace in which the interference lies may be unknown prior to transmission.  \label{fig:intcartoon}}
\end{center}
\end{figure}
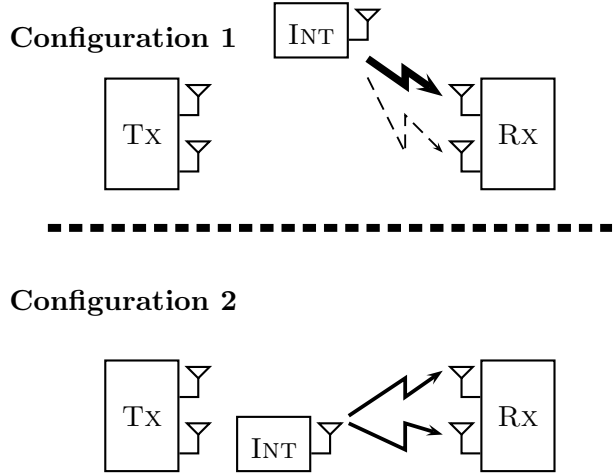

In this section we also consider fully randomized coding, but restrict the jammer to have only a single antenna.  This means that the set of noise-plus-interference covariance matrices is no longer convex and requires new analysis.  When we limit the jammer's degrees of freedom, characterizing capacity becomes more difficult.  To illustrate our problem, consider the two possible configurations shown in Figure \ref{fig:intcartoon}.  A $2 \times 2$ MIMO system is subject to unknown interference from a single antenna system.  Because the location of the interferer is not known prior to transmission, the MIMO system must choose a rate and coding scheme that will work regardless of the interferer's location.  The fact that the interferer has a single antenna means that the interference lies in an unknown one-dimensional subspace of the received signal.  We show that this limitation can be exploited to achieve rates higher than the full-rank jammer \cite{HughesN:88vgavc}.  To focus on these rank effects, we assume the encoder and decoder share common randomness.

\subsection{Channel model}

For simplicity, we will treat our MIMO channel as a vector Gaussian channel.  Over a blocklength $n$, the  channel is given by
	\begin{align}
	\mbf{Y} = \mbf{X} + \mbf{s}\mbf{g}^T + \mbf{W},
	\label{eq:mimochan}
	\end{align}
where $\mbf{X}$, $\mbf{Y}$, and $\mbf{W}$ taking values in $\mathbb{R}^{n \times M}$, $\mbf{g}$ is an arbitrary unit vector in $\mathbb{R}^M$, the interference $\mbf{s}$ is subject to the same average power constraint $\norm{\mbf{s}}^2 \le n \jpwr$, and each row of $\mbf{W}$ is i.i.d.~with distribution $\mc{N}(0,\Sw)$, where $\Sw$ is a positive definite $M \times M$ covariance matrix.  The transmitter is also subject to a sum power constraint $\norm{\mbf{X}}^2 \le n \xpwr$.  We can, without loss of generality, take the noise covariance matrix to be diagonal, so $\Sw = \diag(\noi{1}, \noi{2}, \ldots, \noi{M})$.
The interference is constrained to a rank-1 subspace, albeit an unknown one.  We must therefore design a coding scheme that works for all values of $\mbf{g}$.
We call such a channel an $(M,M,1)$ MIMO AVC.  It is easy to generalize this model to $(M_T,M_R,M_J)$ MIMO AVCs with $M_T$ transmit antennas, $M_R$ receive antennas, and $M_J$ jamming antennas.

An $(n,N)$ deterministic code with power constraint $\xpwr$ for this channel is a pair of functions $(\denc,\ddec)$, where $\denc : [N] \to \mathbb{R}^{n \times M}$ and $\ddec : \mathbb{R}^{n \times M} \to [N]$ and
	\begin{align*}
	\norm{ \denc(i) }^{2} \le n \xpwr \qquad \forall i \in [N].
	\end{align*}
An $(n,N)$ randomized code with power constraint $\xpwr$ is a random variable $(\renc,\rdec)$ taking values in the set of $(n,N)$ deterministic codes.  The maximal probability of error for a randomized code under a rank-$1$ jammer with power $\jpwr$ for the channel \eqref{eq:mimochan} is
	\begin{align*}
	\varepsilon 
	= \max_{ \mbf{s} \in \mathbb{R}^n : \norm{\mbf{s}} \le n \jpwr }
		\max_{ \mbf{g} \in \mathbb{R}^M : \norm{ \mbf{g} } = 1 }
		\max_{i \in [N]}
		\prob\left( \rdec( \renc(i) + \mbf{s} \mbf{g}^T + \mbf{W} ) \ne i \right).
	\end{align*}
A rate $R$ is achievable under randomized coding and maximal error if there exists a sequence of $(n,\lceil \exp(nR) \rceil)$ randomized codes whose maximal error goes to $0$ as $n \to \infty$.  The randomized coding capacity $C_r$ is the supremum of the achievable rates.

\subsection{Main Result}

In the case without the rank constraint on the interference, the jammer can also allocate power to all the degrees of freedom in this channel.  This channel is equivalent to a vector Gaussian AVC \cite{HughesN:88vgavc} and the capacity for general $M$ under randomized coding is known to be given by a ``mutual waterfilling'' strategy.  Both the transmitter and jammer choose diagonal covariance matrices.  The jammer chooses a covariance $\diag(\jpwr_1, \jpwr_2, \ldots, \jpwr_M)$ by waterfilling over the noise spectrum:
	\begin{align}
	\lambda^{\ast} &= \max\left\{ \lambda : \left( \lambda - \noi{m} \right)^{+} \le \jpwr \right\} \label{eq:jam_wfill1} \\
	\jpwr_m &= \left( \lambda^{\ast} - \noi{m} \right)^{+}.
	\end{align}
The transmitter then chooses a covariance $\diag(\xpwr_1, \xpwr_2, \ldots, \xpwr_M)$ based on this worst jamming strategy:
	\begin{align}
	\gamma^{\ast} &= \max\left\{ \gamma : \left( \gamma - \noi{m} - \jpwr_m \right)^{+} \le \xpwr \right\} \\
	\xpwr_m &= \left( \gamma^{\ast} - \noi{m} - \jpwr_m \right)^{+}.	
	\label{eq:tx_wfill2}
	\end{align}
Hughes and Narayan \cite{HughesN:88vgavc} showed that this allocation is a saddle point for the mutual information and is achievable for the Gaussian AVC with randomized coding.  Later, Csisz\'{a}r \cite{Csiszar:91general} showed that the capacity for deterministic codes is also given by this allocation if $\xpwr > \jpwr$.

By treating the jammer as if it has $M$ antennas, the capacity of the vector Gaussian AVC \cite{HughesN:88vgavc} is an achievable rate for the $(M,M,1)$ MIMO AVC model. 

\begin{theorem}[Full rank jammer \cite{HughesN:88vgavc}]
For the $(M,M,1)$ MIMO AVC, the following rate is achievable using randomized coding:
	\begin{align}
	R_{\mathrm{wfill}} = \sum_{m=1}^{M} \frac{1}{2} \log\left( 1 + \frac{\xpwr_m}{\jpwr_m + \noi{m}} \right),
	\label{eq:rate:wfill}
	\end{align}
where $\{\xpwr_m\}$ and $\{\jpwr_m\}$ are given by the waterfilling solutions in \eqref{eq:jam_wfill1}--\eqref{eq:tx_wfill2}.
\end{theorem}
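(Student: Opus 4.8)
The plan is a direct reduction: a single-antenna (rank-one) jammer is a restricted version of the full-rank $M$-antenna vector jammer of Hughes and Narayan \cite{HughesN:88vgavc}, so their randomized-coding achievability result carries over with no new analysis. The only thing that needs checking is an elementary norm identity. For $\mbf{s} \in \mathbb{R}^n$ and a unit vector $\mbf{g} \in \mathbb{R}^M$ the rank-one interference matrix satisfies
	\begin{align*}
	\norm{\mbf{s}\mbf{g}^T}^2 = \tr\left( \mbf{g}\, \mbf{s}^T \mbf{s}\, \mbf{g}^T \right) = \norm{\mbf{s}}^2 \norm{\mbf{g}}^2 = \norm{\mbf{s}}^2 ,
	\end{align*}
so the constraint $\norm{\mbf{s}}^2 \le n\jpwr$ implies $\norm{\mbf{s}\mbf{g}^T}^2 \le n\jpwr$. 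Hence the family of interference matrices $\left\{ \mbf{s}\mbf{g}^T : \norm{\mbf{s}}^2 \le n\jpwr,\ \norm{\mbf{g}}=1 \right\}$ available in the $(M,M,1)$ model is contained in the family $\left\{ \mbf{Z} \in \mathbb{R}^{n\times M} : \norm{\mbf{Z}}^2 \le n\jpwr \right\}$ available to an unrestricted $M$-antenna jammer acting against the same additive-noise channel.

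With the containment in hand, I would invoke the Hughes--Narayan result: for every $R < R_{\mathrm{wfill}}$ there is a sequence of rate-$R$ randomized codes for the vector Gaussian AVC whose maximal probability of error, maximized over all $\mbf{Z}$ with $\norm{\mbf{Z}}^2 \le n\jpwr$, tends to $0$ as $n\to\infty$; the achieving input is the product Gaussian with the waterfilling covariance $\diag(\xpwr_1,\ldots,\xpwr_M)$, and the saddle-point optimality of the two waterfilling allocations \eqref{eq:jam_wfill1}--\eqref{eq:tx_wfill2} is exactly what \cite{HughesN:88vgavc} establishes. Applying such a code on the channel \eqref{eq:mimochan}, its maximal error $\varepsilon$ is the supremum of the same error probabilities taken over the smaller jammer set, hence is upper bounded by the Hughes--Narayan maximal error and also vanishes. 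This shows $R_{\mathrm{wfill}}$ is achievable for the $(M,M,1)$ MIMO AVC under randomized coding, and the identical argument yields achievability of $R_{\mathrm{wfill}}$ in any $(M_T,M_R,M_J)$ model with $M_J \le M_R$.

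I do not expect any genuine obstacle: the set-containment of jammers does all the work, and the norm identity above is the only computation. Consistent with the purpose of this section, the argument gives only achievability; it says nothing about whether the rank-one restriction permits strictly higher rates, which is the point taken up in the results that follow. One could, if a self-contained treatment were desired, instead rerun the random-coding and typicality argument of \cite{HughesN:88vgavc} directly using the structure of $\mbf{s}\mbf{g}^T$, but that would reprove a strictly weaker statement and is unnecessary.
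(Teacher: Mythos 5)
Your proposal is correct and follows the same route the paper takes: the paper justifies this theorem with exactly the observation that "by treating the jammer as if it has $M$ antennas," the vector Gaussian AVC result of Hughes and Narayan applies, i.e., a reduction via containment of the rank-one jammer set in the full-rank jammer set. Your version merely makes the norm identity $\norm{\mbf{s}\mbf{g}^T}^2 = \norm{\mbf{s}}^2$ and the resulting error-probability domination explicit, which is a fine elaboration of the same argument.
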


However, the rank constraint on the jammer should admit rates higher than $R_{\mathrm{wfill}}$, since in many cases the jammer's waterfilling strategy does not satisfy its rank constraint.   By examining the arguments of Hughes and Narayan \cite{HughesN:88vgavc}, we can find an achievable rate for this channel.  If the transmitter fixes a covariance matrix $\Sx$ first, then the following rate is achievable:
	\begin{align}
	R^{\ast} = \max_{\Sx : \tr(\Sx) \le \xpwr} \min_{\mbf{g} : \norm{\mbf{g}} = 1} 
		\frac{1}{2} \log \frac{ \det(\Sx + \Sw + \jpwr \mbf{g}\mbf{g}^T) }{ \det(\Sw + \jpwr \mbf{g}\mbf{g}^T) }.
	\label{eq:rankdef_maxmin}
	\end{align}
Unfortunately, even the inner minimization is not convex in general, so standard optimization techniques are difficult to apply.  If the max-min is equal to the min-max then this expression is the capacity.  An optimistic upper bound on the capacity can be found by assuming the jammer adds $\jpwr$ to the sub-channel with the weakest noise and then using the waterfilling solution for the transmitter, which proves the following theorem.

\begin{theorem}
\label{thm:mimoupper}
Suppose $\noi{1} \le \noi{2} \le \cdots \le \noi{M}$ and let $\tau_1 = \noi{1} + \jpwr$ and $\tau_m = \noi{m}$ for $m \ge 2$.  Then for
	\begin{align*}
	\gamma^{\ast} &= \max \left\{ \gamma : \sum_{m=1}^{M} (\gamma - \tau_m)^{+} \le \Gamma \right\} \\
	\xpwr_m &= (\gamma - \tau_m)^{+},
	\end{align*}
the capacity of the $(M,M,1)$ MIMO AVC is upper bounded by
	\begin{align}
	R_{\mathrm{ub}} = \sum_{m=1}^{M} \frac{1}{2} \log\left( 1 + \frac{ \xpwr_m }{ \tau_m } \right).
	\label{eq:mimoub}
	\end{align}
\end{theorem}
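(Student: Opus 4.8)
\textbf{Proof proposal for Theorem \ref{thm:mimoupper}.}

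The plan is to prove this converse by freezing the jammer into one particular non‑adaptive Gaussian strategy and then invoking the capacity of the resulting \emph{fixed} memoryless vector Gaussian channel: since any reliable code for the $(M,M,1)$ MIMO AVC must in particular be reliable against that one strategy, the capacity of that fixed channel bounds $C_r$ from above. Concretely, I would take the jamming direction to be $\mbf{g}=e_1$, aligned with the weakest noise component $\noi{1}$, and let the jammer feed $\mbf{s}$ with i.i.d.\ $\mc{N}(0,\jpwr-\delta)$ entries for a small $\delta>0$. Under this choice the channel \eqref{eq:mimochan} degenerates to $\mbf{Y}_t=\mbf{X}_t+\mbf{N}_t$ with $\mbf{N}_t\sim\mc{N}(0,\Sn)$ i.i.d.\ across $t$ and $\Sn=\diag(\noi{1}+\jpwr-\delta,\ \noi{2},\ \ldots,\ \noi{M})$, which is $\diag(\tau_1,\ldots,\tau_M)$ with $\tau_1$ backed off by $\delta$.

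The first technical point is that this is an admissible jammer only up to the almost‑sure power constraint $\norm{\mbf{s}}^2\le n\jpwr$. I would handle this by conditioning the i.i.d.\ $\mc{N}(0,\jpwr-\delta)$ vector on the event $\{\norm{\mbf{s}}^2\le n\jpwr\}$, whose probability tends to $1$ by the law of large numbers; the conditioned distribution is supported on valid jammer inputs and is within $o(1)$ total variation of the unconditioned Gaussian, so the error it induces differs by $o(1)$ from the error induced by the unconditioned Gaussian. Since the maximal error $\varepsilon$ is a supremum over all admissible $(\mbf{s},\mbf{g})$, it dominates the error averaged against this admissible jammer distribution, and hence, up to $o(1)$, it dominates the error of the code on the fixed channel with noise covariance $\Sn$.

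The second step is the textbook converse for that fixed channel. Conditioning on the shared randomness and applying Fano's inequality gives, for any sequence of codes with $\varepsilon\to0$ at rate $R$,
	\begin{align*}
	R \;\le\; \max_{\Sx\succeq 0,\ \tr(\Sx)\le\xpwr}\ \frac{1}{2}\log\det\!\left(I+\Sn^{-1}\Sx\right)+o(1),
	\end{align*}
using concavity of $\log\det$ and the entropy‑maximizing property of the Gaussian. Because $\Sn$ is diagonal, Hadamard's inequality forces the optimal $\Sx$ to be diagonal and turns the right‑hand side into the classical parallel‑Gaussian waterfilling problem $\max\sum_m\tfrac12\log(1+\xpwr_m/\tau_m)$ subject to $\sum_m\xpwr_m\le\xpwr$, whose solution is exactly the water‑level characterization $\xpwr_m=(\gamma^{\ast}-\tau_m)^{+}$ stated in the theorem (with $\tau_1$ carrying the extra $-\delta$). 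Letting $\delta\downarrow0$ — at which the waterfilling value is continuous — and $n\to\infty$ yields $R\le R_{\mathrm{ub}}$, and taking the supremum over achievable rates gives the claim.

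The only real obstacle is the mismatch between the hard almost‑sure jammer power constraint and the i.i.d.\ Gaussian strategy, which is dispatched by the $\delta$‑backoff‑and‑condition device above; everything else is a direct application of the single‑letter Gaussian vector‑channel converse together with Hadamard's inequality and waterfilling. Note that we do not claim $e_1$ is the jammer's best fixed direction — choosing the weakest‑noise coordinate simply produces one valid (``optimistic'') converse, which is all the statement asserts.
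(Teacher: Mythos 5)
Your proposal is correct and follows essentially the same route as the paper, which dismisses the proof in a single sentence: fix the jammer to dump its full power $\jpwr$ (as Gaussian noise) on the weakest-noise subchannel $\noi{1}$, and upper-bound the AVC capacity by the waterfilling capacity of the resulting fixed parallel Gaussian channel. Your additional care with the almost-sure power constraint (the $\delta$-backoff and conditioning device) and the Fano/Hadamard details simply fills in steps the paper leaves implicit.
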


Our main result is a characterization of the optimal strategies in \eqref{eq:rankdef_maxmin}.  The following theorem shows that the maximizing input covariance $\Sx$ is diagonal and the corresponding minimizing jamming strategy is to jam a single subchannel.

\begin{theorem}
\label{thm:optjamdir}
The input covariance matrix $\Sx$ maximizing the rate (\ref{eq:rankdef_maxmin}) for the $(M,M,1)$ MIMO AVC is diagonal.  Suppose $\Sx = \diag(\xpwr_1, \xpwr_2, \ldots, \xpwr_M)$.  Then the worst-case jamming direction $\mbf{g}$ is equal to $\mbf{e}_m$, where
	\begin{align}
	m = \argmax_{i \in [M]}
	\frac{ \xpwr_i / \sigma_i^2 }{ \xpwr_i + \sigma_i^2 + \Lambda }.
	\label{eq:optjamcond}
	\end{align}
\end{theorem}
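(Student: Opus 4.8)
The plan is to handle the two assertions in turn: first compute the worst-case jamming direction for a \emph{fixed} diagonal input covariance, and then use a symmetrization/concavity argument to show the outer maximization is achieved by a diagonal $\Sx$. Fix $\Sx = \diag(\xpwr_1,\ldots,\xpwr_M)$. By the matrix determinant lemma, for a unit vector $\mbf{g}$,
\[
\det(\Sx + \Sw + \jpwr\,\mbf{g}\mbf{g}^T) = \det(\Sx+\Sw)\Big(1 + \jpwr\sum_{i=1}^M \frac{g_i^2}{\xpwr_i + \noi{i}}\Big),
\]
and likewise with $\Sx$ deleted in the denominator of \eqref{eq:rankdef_maxmin}. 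Hence the objective equals $\tfrac12\log\big[\prod_i(1+\xpwr_i/\noi{i})\big] + \tfrac12\log f(\mbf{p})$, where $p_i = g_i^2$, $\mbf{p}$ ranges over the probability simplex $\Delta_M$, and
\[
f(\mbf{p}) = \frac{1 + \jpwr\sum_i p_i/(\xpwr_i+\noi{i})}{1 + \jpwr\sum_i p_i/\noi{i}}.
\]
Since $f$ is a ratio of affine functions of $\mbf{p}$ with strictly positive denominator, its restriction to any line segment is monotone (the derivative has constant sign), so its minimum over the polytope $\Delta_M$ is attained at a vertex $\mbf{e}_m$. Thus the worst jamming direction is a coordinate vector, which is the structural claim; it remains to identify $m$.

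At $\mbf{g} = \mbf{e}_m$ both determinants are of diagonal matrices, so
\[
R(\Sx,\mbf{e}_m) = \tfrac12\sum_{i=1}^M \log\!\big(1+\tfrac{\xpwr_i}{\noi{i}}\big) - \tfrac12\log\!\big(1+\tfrac{\xpwr_m}{\noi{m}}\big) + \tfrac12\log\!\big(1+\tfrac{\xpwr_m}{\noi{m}+\jpwr}\big).
\]
Minimizing over $m$ is the same as maximizing the jamming penalty $\tfrac12\log(1+\xpwr_m/\noi{m}) - \tfrac12\log(1+\xpwr_m/(\noi{m}+\jpwr))$, and combining the two logarithms collapses this to $\tfrac12\log\!\big(1 + \jpwr\,(\xpwr_m/\noi{m})/(\xpwr_m+\noi{m}+\jpwr)\big)$. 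Dropping the constant factor $\jpwr$, the maximizer is $m = \argmax_{i} (\xpwr_i/\noi{i})/(\xpwr_i+\noi{i}+\jpwr)$, which is exactly \eqref{eq:optjamcond} (any tie may be broken arbitrarily).

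For the diagonality of the optimal $\Sx$, let $R(\Sx)$ denote the inner minimum in \eqref{eq:rankdef_maxmin}. For each fixed $\mbf{g}$, $\Sx \mapsto \log\det(\Sx+\Sw+\jpwr\,\mbf{g}\mbf{g}^T)$ is concave on the positive semidefinite cone and the denominator $\det(\Sw+\jpwr\,\mbf{g}\mbf{g}^T)$ does not involve $\Sx$, so $\Sx\mapsto R(\Sx)$ is a pointwise infimum of concave functions, hence concave, over the convex feasible set $\{\Sx\succeq 0 : \tr(\Sx)\le\xpwr\}$. Moreover $R$ is invariant under $\Sx \mapsto J\Sx J$ for every diagonal sign matrix $J = \diag(\pm1,\ldots,\pm1)$: since $\Sw$ is diagonal we have $J\Sw J = \Sw$, $\det J = \pm1$, and $\mbf{g}\mapsto J\mbf{g}$ is a bijection of the unit sphere, so the inner minimization is unchanged. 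Letting $\Sx^\ast$ be any maximizer and averaging, Jensen's inequality gives
\[
R\Big(\tfrac{1}{2^M}\sum_{J} J\Sx^\ast J\Big) \ \ge\ \tfrac{1}{2^M}\sum_{J} R(J\Sx^\ast J) \ =\ R(\Sx^\ast),
\]
while $\tfrac{1}{2^M}\sum_{J} J\Sx^\ast J$ is diagonal, positive semidefinite, and has the same trace as $\Sx^\ast$, hence is a feasible maximizer. Together with the first two paragraphs this proves the theorem.

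The step I expect to require the most care is the justification of the symmetrization: because the inner minimization over $\mbf{g}$ is not convex (the noise-plus-interference covariances $\Sw+\jpwr\,\mbf{g}\mbf{g}^T$ do not form a convex set), the max-min need not, a priori, inherit any symmetry of the data; it is precisely the concavity of $\Sx\mapsto R(\Sx)$ — which survives the infimum over $\mbf{g}$ only because the denominator is free of $\Sx$ — that legitimizes the averaging argument. The other non-routine observation is the reparametrization $p_i = g_i^2$, which converts the inner minimization into a linear-fractional program on a simplex and is therefore solved at a vertex; everything downstream is elementary algebra.
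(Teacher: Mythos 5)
Your argument is correct, and it splits into a part that mirrors the paper and a part that is genuinely different. For the inner minimization you follow essentially the paper's route: both proofs use the matrix determinant lemma to reduce the objective to the ratio $\bigl(1+\jpwr\,\mbf{g}^T(\Sx+\Sw)^{-1}\mbf{g}\bigr)/\bigl(1+\jpwr\,\mbf{g}^T\Sw^{-1}\mbf{g}\bigr)$ and exploit its linear-fractional dependence on the squared coordinates $g_i^2$; the paper does this one coordinate at a time (the sign of the derivative in $g_M^2$ is independent of $g_M^2$, then recurse), whereas your parametrization $p_i=g_i^2$ over the simplex and the quasi-linearity of a linear-fractional map settle it in one step, and your collapse of the per-subchannel penalty to $\tfrac12\log\bigl(1+\jpwr\,(\xpwr_m/\noi{m})/(\xpwr_m+\noi{m}+\jpwr)\bigr)$ is a tidier derivation of \eqref{eq:optjamcond} than the paper's chain of determinant inequalities. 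The real divergence is the diagonality of the optimal $\Sx$: the paper compares $\Sx$ with its diagonal part $\Sx^{\circ}$ via Hadamard's inequality, $\det(\Sx+\Sw+\jpwr\,\mbf{e}_m\mbf{e}_m^T)\le\det(\Sx^{\circ}+\Sw+\jpwr\,\mbf{e}_m\mbf{e}_m^T)$, and must invoke the already-established elementary-vector structure of the worst $\mbf{g}$ to close the chain $\min_{\mbf{g}}L(\Sx,\mbf{g})\le L(\Sx,\mbf{e}_{m^{\circ}})\le L(\Sx^{\circ},\mbf{e}_{m^{\circ}})=\min_{\mbf{g}}L(\Sx^{\circ},\mbf{g})$, while you prove concavity of $\Sx\mapsto\min_{\mbf{g}}L(\Sx,\mbf{g})$ (valid precisely because the denominator is free of $\Sx$) together with invariance under the sign group $J=\diag(\pm1,\ldots,\pm1)$ and average by Jensen. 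Both arguments establish only the existence of a diagonal maximizer, which is all the theorem uses; the paper's Hadamard step is more elementary, whereas your symmetrization is independent of the structure of the worst-case $\mbf{g}$ and would carry over verbatim to jammers of higher rank, where the vertex characterization of the inner minimum is no longer available — a modest but genuine gain in generality.
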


As an example, in some cases the same waterfilling allocation for the transmitter can achieve rates higher than \eqref{eq:rate:wfill}.

\begin{corollary}
\label{cor:MM1rate}
Let $\noi{1} \le \noi{2} \le \cdots \le \noi{M}$, and let $\{\xpwr_m\}$ 
be given by the waterfilling allocation in \eqref{eq:jam_wfill1} and \eqref{eq:tx_wfill2}.  The 
following rate is achievable over the $(M,M,1)$ MIMO AVC:
	\begin{align}
	R = \frac{1}{2} \log\left( 1 + \frac{ \xpwr_1 }{ \noi{1} + \jpwr } \right)
		+ \sum_{m=2}^{M} \frac{1}{2}
			\log \left( 1 + \frac{\xpwr_m}{ \noi{m} } \right).
	\label{eq:rate:wfillnew}
	\end{align}
If $\jpwr \le \noi{2} - \noi{1}$ then this rate is equal to $R_{\mathrm{ub}}$ in \eqref{eq:mimoub} and is the capacity.
If $\jpwr > \noi{2} - \noi{1}$ then this rate is larger than $R_{\mathrm{wfill}}$ in \eqref{eq:rate:wfill}.
\end{corollary}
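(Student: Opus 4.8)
The plan is to instantiate the achievable max--min rate \eqref{eq:rankdef_maxmin} at the input covariance $\Sx = \diag(\xpwr_1,\ldots,\xpwr_M)$ given by the mutual-waterfilling allocation \eqref{eq:jam_wfill1}--\eqref{eq:tx_wfill2}, and then apply Theorem~\ref{thm:optjamdir} to identify the resulting worst-case jamming direction. Since this $\Sx$ is a feasible input covariance, the rate $\min_{\mbf{g}:\norm{\mbf{g}}=1} \frac{1}{2}\log\frac{\det(\Sx+\Sw+\jpwr\mbf{g}\mbf{g}^T)}{\det(\Sw+\jpwr\mbf{g}\mbf{g}^T)}$ is achievable. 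Theorem~\ref{thm:optjamdir} states that this minimum is attained at $\mbf{g}=\mbf{e}_m$ where $m$ is the maximizer in \eqref{eq:optjamcond}, so the central step is to show that for the waterfilling allocation the maximizer is $m=1$. Granting this, $\Sx$, $\Sw$, and $\jpwr\mbf{e}_1\mbf{e}_1^T$ are simultaneously diagonal, the determinant ratio factors subchannel by subchannel, and the achievable rate becomes $\frac{1}{2}\log(1+\xpwr_1/(\noi{1}+\jpwr)) + \sum_{m\ge2}\frac{1}{2}\log(1+\xpwr_m/\noi{m})$, which is exactly \eqref{eq:rate:wfillnew}.

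To show $m=1$, I would use the structure of the waterfilling solution: $\jpwr_m=(\lambda^\ast-\noi{m})^+$ and $\xpwr_m=(\gamma^\ast-\noi{m}-\jpwr_m)^+$, together with $\noi{1}\le\cdots\le\noi{M}$. Because $\jpwr_m$ is non-increasing in $m$, the quantity $\xpwr_m+\noi{m}=\gamma^\ast-\jpwr_m$ (on any subchannel with $\xpwr_m>0$) is non-decreasing in $m$, so the denominator $\xpwr_m+\noi{m}+\jpwr$ of \eqref{eq:optjamcond} is smallest at $m=1$. For the numerator, $\xpwr_m/\noi{m}$ equals $(\gamma^\ast-\lambda^\ast)/\noi{m}$ on jammed subchannels and $\gamma^\ast/\noi{m}-1$ on unjammed ones, both non-increasing in $m$, and the crossover at the water level preserves monotonicity, so the numerator is largest at $m=1$; subchannels with $\xpwr_m=0$ give ratio $0$, and $\xpwr_1>0$ whenever $\xpwr>0$. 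Hence $m=1$ maximizes \eqref{eq:optjamcond}, and \eqref{eq:rate:wfillnew} is achievable.

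The two special cases then follow. If $\jpwr\le\noi{2}-\noi{1}$, the jammer's waterfilling puts all power on the weakest subchannel, i.e.\ $\jpwr_1=\jpwr$ and $\jpwr_m=0$ for $m\ge2$; thus the thresholds $\tau_m$ of Theorem~\ref{thm:mimoupper} equal $\noi{m}+\jpwr_m$, the transmit allocation of that theorem coincides with $\{\xpwr_m\}$, and the upper bound $R_{\mathrm{ub}}$ of \eqref{eq:mimoub} is term-for-term equal to \eqref{eq:rate:wfillnew}; since that rate is achievable, it is the capacity. If $\jpwr>\noi{2}-\noi{1}$, the waterfilling jammer spreads its power over at least two subchannels. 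With the transmitter fixed at $\Sx$, the Hughes--Narayan saddle-point property says $\diag(\jpwr_m)$ minimizes the mutual information over all feasible jammer covariances, so the feasible strategy $\jpwr\mbf{e}_1\mbf{e}_1^T$ yields a value at least $R_{\mathrm{wfill}}$ --- and that value is precisely \eqref{eq:rate:wfillnew}. For strictness, restrict to diagonal jammer covariances $\diag(j_1,\ldots,j_M)$: the mutual information is the separable sum $\sum_m\frac{1}{2}\log(1+\xpwr_m/(\noi{m}+j_m))$, each summand strictly convex in $j_m$ on every subchannel with $\xpwr_m>0$, and every jammed subchannel has $\xpwr_m>0$; hence $\diag(\jpwr_m)$ is the unique diagonal minimizer, so the distinct feasible diagonal strategy $\jpwr\mbf{e}_1\mbf{e}_1^T$ gives a strictly larger value than $R_{\mathrm{wfill}}$.

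The main obstacle is the second step --- verifying that $\mbf{e}_1$ is the worst-case jamming direction for the waterfilling covariance. The delicate point is the boundary between the jammed subchannels (where $\noi{m}<\lambda^\ast$ and, after flattening, the transmit powers are equal) and the unjammed ones (where $\jpwr_m=0$): one must confirm that the ratio in \eqref{eq:optjamcond} stays monotone across this transition. Everything else is a direct determinant computation or an appeal to Theorem~\ref{thm:mimoupper} and the Hughes--Narayan saddle point, both of which may be taken as given.
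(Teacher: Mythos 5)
Your proposal is correct and follows essentially the same route as the paper's proof: evaluate the achievable max--min rate \eqref{eq:rankdef_maxmin} at the mutual-waterfilling covariance, invoke Theorem~\ref{thm:optjamdir} to identify $\mbf{e}_1$ as the worst-case jamming direction so the rate factors into \eqref{eq:rate:wfillnew}, and then handle the two regimes via Theorem~\ref{thm:mimoupper} and comparison with $R_{\mathrm{wfill}}$. The only difference is that you supply the monotonicity verification for \eqref{eq:optjamcond} and the saddle-point/strict-convexity argument for $R > R_{\mathrm{wfill}}$, steps the paper asserts without detail.
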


\subsection{Analysis}

We begin with a simple technical lemma whose proof we include for completeness.

\begin{lemma}[Matrix Determinant Lemma]
\label{lem:mdl}
Let $A$ be an $M \times M$ positive definite matrix and $U$ and $V$ be two $M \times k$ matrices.  Then
	\[
	\det(A + U V^H) = \det(A) \det(I_k + V^H A^{-1} U).
	\]
\end{lemma}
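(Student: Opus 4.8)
The plan is to reduce the claim to the commutation identity $\det(I_M + XY) = \det(I_k + YX)$ and to prove the latter by a block-matrix computation. First, since $A$ is positive definite it is invertible, so I would factor $A + U V^H = A\,(I_M + A^{-1} U V^H)$ and use multiplicativity of the determinant to get $\det(A + U V^H) = \det(A)\,\det(I_M + A^{-1} U V^H)$. It then remains to show $\det(I_M + A^{-1} U V^H) = \det(I_k + V^H A^{-1} U)$, which is exactly Sylvester's identity with $X = A^{-1} U$ (an $M \times k$ matrix) and $Y = V^H$ (a $k \times M$ matrix).

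To prove $\det(I_M + XY) = \det(I_k + YX)$, I would consider the $(M+k)\times(M+k)$ block matrix
\[
\begin{pmatrix} I_M & -X \\ Y & I_k \end{pmatrix}
\]
and evaluate its determinant in two ways via Schur complements. Left- and right-multiplying by the appropriate unipotent block-triangular matrices (which have determinant $1$) to clear the $(1,1)$ block gives a block-triangular form whose determinant is $\det(I_M)\det(I_k + Y I_M^{-1} X) = \det(I_k + YX)$; clearing the $(2,2)$ block instead gives $\det(I_k)\det(I_M + X I_k^{-1} Y) = \det(I_M + XY)$. Equating the two evaluations yields the identity. If one prefers to skip Sylvester's identity as an intermediate step, the same block-matrix argument can be applied directly to $\begin{pmatrix} A & -U \\ V^H & I_k \end{pmatrix}$, whose two Schur-complement evaluations are $\det(A)\det(I_k + V^H A^{-1} U)$ (clearing the $(1,1)$ block, using that $A$ is invertible) and $\det(I_k)\det(A + U V^H)$ (clearing the $(2,2)$ block); this is the cleanest route and is what I would actually write.

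There is no genuine obstacle here — every step is a routine manipulation. The only thing to watch is the bookkeeping: getting the dimensions right ($X$ is $M\times k$, $Y$ is $k\times M$, and the two Schur complements live in different-sized spaces) and choosing the sign in the off-diagonal block so that the complements come out as $A + U V^H$ and $I_k + V^H A^{-1} U$ rather than with a spurious minus sign. Positive definiteness of $A$ is used only to guarantee $A^{-1}$ exists; the identity in fact holds for any invertible $A$, and then for all $A$ by continuity, though that extra generality is not needed here.
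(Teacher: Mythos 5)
Your proposal is correct and follows essentially the same route as the paper: the paper's proof uses exactly the block matrix $\left[\begin{smallmatrix} A & -U \\ V^H & I \end{smallmatrix}\right]$ and a Schur-complement factorization, taking determinants to conclude. Your write-up is in fact slightly more complete, since you make explicit the second evaluation (clearing the $(2,2)$ block to obtain $\det(A+UV^H)$), which the paper leaves implicit after stating only one factorization.
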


\begin{proof}
First,
	\begin{align*}
	\left[
		\begin{array}{cc}
		A & -U \\
		V^{H} & I
		\end{array}
	\right]
	&= 
	\left[
		\begin{array}{cc}
		A & 0 \\
		V^{H} & I
		\end{array}
	\right]
	\cdot
	\left[
		\begin{array}{cc}
		I & - A^{-1} U \\
		0 & I + V^H A^{-1} U
		\end{array}
	\right].
	\end{align*}
Taking determinants on both sides yields the result.
\end{proof}

\begin{proof}[Proof of Theorem \ref{thm:optjamdir}]
We prove the second part of the theorem first.  Let
	\begin{align}
	L(\Sx,\mbf{g}) = \frac{1}{2} \log \frac{\det(\Sx + \Sw + \Lambda \mbf{g} \mbf{g}^T) }{ \det( \Sw + \Lambda \mbf{g} \mbf{g}^T ) }.
	\label{eq:mimorate}
	\end{align}
So we can use Lemma \ref{lem:mdl} to expand this:
	\[
	\frac{1}{2} \log \frac{ 
		\det(\Sx + \Sw) ( 1 + \Lambda \mbf{g}^T (\Sx + \Sw)^{-1} \mbf{g} ) 
		}{
		\det(\Sw) (1 + \Lambda \mbf{g}^T \Sw^{-1} \mbf{g} )
		}.
	\]
Thus minimizing over $\mbf{g}$ reduces to minimizing
	\[
	J(\mbf{g}) = \log \frac{ 1 + \Lambda \mbf{g}^T (\Sx + \Sw)^{-1} \mbf{g} }{ 1 + \Lambda \mbf{g}^T \Sw^{-1} \mbf{g} }.
	\]

Let $\mbf{g} = (\sqrt{1 - g_M^2} \mbf{h}, g_M)^T$, where $\mbf{h} \in \mathbb{R}^{M-1}$ is a unit vector.  Taking the gradient of $J(\mbf{g})$ we see:
	\begin{align*}
	\nabla J(\mbf{g}) 
	&= \frac{ \Lambda }{ 1 + \Lambda \mbf{g}^T (\Sx + \Sw)^{-1} \mbf{g} }
		(\Sx + \Sw)^{-1} \mbf{g} \\
		& \hspace{0.5in}
		- 
	  \frac{ \Lambda }{ 1 + \Lambda \mbf{g}^T \Sw^{-1} \mbf{g} }
	  	\Sw^{-1} \mbf{g}.
	\end{align*}	
We can write the function $J(\cdot)$ with respect to $g_M^2$:
	\begin{align*}
	J(g_M^2) &= \log \frac{1 + \Lambda \left( \frac{g_M^2}{\xpwr_M + \sigma_M^2} + \sum_{m=1}^{M-1} (1 - g_M^2) \frac{ h_m^2 }{\xpwr_m + \sigma_m^2 } \right) 
		}{
		1 + \Lambda \left( \frac{ g_M^2}{\sigma_M^2 } + \sum_{m=1}^{M-1} (1 - g_M^2) \frac{ h_m^2 }{\sigma_m^2} \right)
		}.
	\end{align*}
Let
	\begin{align*}
	\alpha_1 & = \sum_{m=1}^{M-1} \frac{ h_m^2 }{\xpwr_m + \sigma_m^2 } \\
	\alpha_2 &= \sum_{m=1}^{M-1} \frac{ h_m^2 }{ \sigma_m^2 }.
	\end{align*}	
Taking a derivative with respect to $g_M^2$:
	\begin{align*}
	\frac{\partial}{\partial g_M^2} J(g_M^2)
	&= \frac{ \Lambda \left( \frac{1}{\xpwr_M + \sigma_M^2} - \alpha_1 \right) 
		}{
		1 + \Lambda \left( \frac{g_M^2}{\xpwr_M + \sigma_M^2} + (1 - g_M^2) \alpha_1 \right) 
		} \\
		&\hspace{0.5in}
		-
		\frac{ \Lambda \left( \frac{1}{\sigma_M^2} - \alpha_2 \right) 
		}{
		1 + \Lambda \left( \frac{g_M^2}{\sigma_M^2} + (1 - g_M^2) \alpha_2 \right) 
		} \\
	&= 
		\frac{
		\Lambda \left( \frac{1}{\xpwr_M + \sigma_M^2} - \alpha_1 \right)
		}{
		1 + \Lambda \alpha_1 + 
		\Lambda g_M^2 \left( \frac{1}{\xpwr_M + \sigma_M^2} - \alpha_1 \right)
		} \\
		&\hspace{0.5in}
		-
		\frac{
		\Lambda \left( \frac{1}{\sigma_M^2} - \alpha_2 \right) 
		}{
		1 + \Lambda \alpha_2 + 
		\Lambda g_M^2  \left( \frac{1}{\sigma_M^2} - \alpha_2 \right) 
		}.
	\end{align*}
The derivative is positive if
	\begin{align*}
	\frac{
		\left( \frac{1}{\xpwr_M + \sigma_M^2} - \alpha_1 \right)
		}{
		1 + \Lambda \alpha_1 + 
		\Lambda g_M^2 \left( \frac{1}{\xpwr_M + \sigma_M^2} - \alpha_1 \right)
		}
	& \\
	& \hspace{-0.5in} 
	>
	\frac{
		\left( \frac{1}{\sigma_M^2} - \alpha_2 \right) 
		}{
		1 + \Lambda \alpha_2 + 
		\Lambda g_M^2  \left( \frac{1}{\sigma_M^2} - \alpha_2 \right) 
		},
	\end{align*}
or
	\begin{align*}
	\left( \frac{1}{\xpwr_M + \sigma_M^2} - \alpha_1 \right)
		(1 + \Lambda \alpha_1 ) 
	&>
	\left( \frac{1}{\sigma_M^2} - \alpha_2 \right) 
		(1 + \Lambda \alpha_2 ).
	\end{align*}
Note that this condition is independent of of $g_M$, so for any $\mbf{h}$, the optimal value of
$g_M = 1$ or $g_M = 0$.  In the first case, the jammer's optimal strategy is to choose $\mbf{g} = \mbf{e}_M$.  In the second case, $(\sqrt{1 - g_M^2} \mbf{h},g_M)$ yields a higher rate
than $(\mbf{h},0)$ so repeating the argument on $\mbf{h}$ shows that $\mbf{g} = \mbf{e}_m$ for some $m \in [M-1]$.  Therefore the optimal jammer strategy is to pick $\mbf{g}$ equal to an elementary vector.

If $\mbf{g} = \mbf{e}_m$ is optimal for the jammer, then for any $i \ne m$,
	\begin{align*}
	\frac{ \det( \Sx + \Sw + \Lambda \mbf{e}_m \mbf{e}_m^T )}{ \det( \Sw + \Lambda \mbf{e}_m \mbf{e}_m^T ) }
	< \frac{ \det( \Sx + \Sw + \Lambda \mbf{e}_i \mbf{e}_i^T )}{ \det( \Sw + \Lambda \mbf{e}_i \mbf{e}_i^T ) },
	\end{align*}	
or
	\begin{align*}
	\frac{ (\xpwr_m + \sigma_m^2 + \Lambda) ( \xpwr_i + \sigma_i^2 ) 
		}{ (\sigma_m^2 + \Lambda) \sigma_i^2 }
	&< \frac{ (\xpwr_i + \sigma_i^2 + \Lambda) ( \xpwr_m + \sigma_m^2 ) 
		}{ (\sigma_i^2 + \Lambda) \sigma_m^2 }.
	\end{align*}
Some algebra reveals that
	\begin{align*}
	(\xpwr_m + \sigma_m^2 + \Lambda) ( \xpwr_i + \sigma_i^2 ) (\sigma_i^2 + \Lambda) \sigma_m^2
	& \\
	&\hspace{-1.5in}
	< 
	(\xpwr_i + \sigma_i^2 + \Lambda) ( \xpwr_m + \sigma_m^2 ) (\sigma_m^2 + \Lambda) \sigma_i^2,
	\end{align*}
from which it follows that
	\begin{align*}
	(\xpwr_m + \sigma_m^2 + \Lambda) ( 
		\xpwr_i \sigma_i^2 \sigma_m^2 
		+ \xpwr_i \sigma_m^2 \Lambda 
		+ \sigma_i^2 \sigma_m^2 \Lambda
		+ \sigma_i^4 \sigma_m^2 
		)
	&  \\
	&\hspace{-3.1in}
	< 
	(\xpwr_i + \sigma_i^2 + \Lambda) ( 
		\xpwr_m \sigma_i^2 \sigma_m^2 
		+ \xpwr_m \sigma_i^2 \Lambda 
		+ \sigma_i^2 \sigma_m^2 \Lambda
		+ \sigma_i^2 \sigma_m^4 
		),
	\end{align*}
and finally that
	\begin{align*}
		\xpwr_i \xpwr_m \sigma_m^2 
		+ \xpwr_i \sigma_m^4  
		+ \xpwr_i \sigma_m^2 \Lambda
	&
	< 
		\xpwr_i \xpwr_m \sigma_i^2 
		+ \xpwr_m \sigma_i^4  
		+ \xpwr_m \sigma_i^2 \Lambda.
	\end{align*}
So for a given $S_x$, the optimal jamming direction is $\mbf{e}_m$ if
	\begin{align}
	\frac{ \xpwr_m/\sigma_m^2 }{ \xpwr_m + \sigma_m^2 + \Lambda}
	&>
	\frac{ \xpwr_i / \sigma_i^2 }{ \xpwr_i + \sigma_i^2 + \Lambda }
	\end{align}
for all $i \ne m$.

Now consider $L(\Sx,\mbf{g})$ in \eqref{eq:mimorate} and suppose $\Sx$ is arbitrary.  Let $\Sx^{\circ} = \diag( \{ (\Sx)_{mm} \})$ be diagonal matrix containing the diagonal of $\Sx$.  Then by Hadamard's inequality, for any $m$,
	\begin{align*}
	L(\Sx^{\circ},\mbf{e}_m) - L(\Sx,\mbf{e}_m) 
	&= \frac{1}{2} \log \frac{ \det( \Sx^{\circ} + \Sw + \jpwr \mbf{e}_m \mbf{e}_m^T )
		}{
		\det( \Sx + \Sw + \jpwr \mbf{e}_m \mbf{e}_m^T )
		} \\
	&\ge 0.
	\end{align*}
Let $m^{\circ} = \argmin_{m} L(\Sx^{\circ},\mbf{e}_m)$.  Then
	\begin{align*}
	\min_{\mbf{g}} L(\Sx,\mbf{g}) 
	&\le L(\Sx, \mbf{e}_{m^{\circ}}) \\
	&\le L(\Sx^{\circ},\mbf{e}_{m^{\circ}}) \\
	&= \min_{\mbf{g}} L(\Sx^{\circ},\mbf{g}),
	\end{align*}
so the transmitter can always increase the rate by choosing $\Sx$ to be diagonal.
\end{proof}

\begin{proof}[Proof of Corollary \ref{cor:MM1rate}]
Let $\xpwr_1 \ge \xpwr_2 \ge \cdots \ge \xpwr_M$ be waterfilling power allocation, and let $\gamma = \sigma_1^2 + \xpwr_1$ be the ``water level.''  For this allocation, it is clear that
	\begin{align}
	\frac{ \xpwr_1/\noi{1} }{ \xpwr_1 + \noi{1} + \Lambda}
	&>
	\frac{ \xpwr_i / \noi{i} }{ \xpwr_i + \noi{i} + \Lambda }.
	\end{align}
for all $i \ne 1$, so by Theorem \ref{thm:optjamdir}, the worst-case direction $\mbf{g}$ for the jammer is $\mbf{g} = \mbf{e}_1$ and \eqref{eq:rate:wfillnew} is achievable.  If $\jpwr \le \noi{2} - \noi{1}$ then this is the optimal strategy for the full-rank jammer so \eqref{eq:rate:wfillnew} is the capacity.
If $\jpwr > \noi{2} - \noi{1}$ then comparing this to $R_{\mathrm{wfill}}$ we see that $R > R_{\mathrm{wfill}}$.
\end{proof}

\subsection{The $(2,2,1)$ MIMO AVC}

For a given diagonal covariance matrix, Theorem \ref{thm:optjamdir} shows that the jammer's optimal strategy is always to jam one of the subchannels.  The set of covariance matrices for which the optimal jamming direction is $\mbf{e}_m$ is given by \eqref{eq:optjamcond}.  Unfortunately, maximizing the rate subject to the conditions in \eqref{eq:optjamcond} does not lead to a clean solution.  In the $(2,2,1)$ MIMO AVC we can carry out the calculation explicitly.

\begin{theorem}
\label{thm:optmimo}
Let $\beta$ be the value of $\xpwr_1$ for which the terms in the maximization \eqref{eq:optjamcond} are equal:
	\[
	\frac{\beta/\noi{1}}{ \beta + \noi{1} + \jpwr } = \frac{ (\xpwr - \beta)/\noi{2} }{ \xpwr - \beta + \noi{2} + \jpwr},
	\]
let
	\[
	\gamma = \frac{1}{2} (\Gamma - (\noi{1} + \jpwr - \noi{2})),
	\]
and let
	\[
	R(\alpha) = \frac{1}{2} \log\left( 1 + \frac{\alpha}{\jpwr + \noi{1}} \right)
		+ \frac{1}{2} \log\left( 1 + \frac{\xpwr - \alpha}{ \noi{2} } \right)
	\]
Then for the $(2,2,1)$ MIMO AVC,
	\begin{itemize}
	\item if $\noi{1} + \jpwr \le \noi{2}$ then $R(\gamma)$ is the capacity,
	\item if $\noi{1} + \jpwr > \noi{2}$,  $\xpwr > \noi{1} + \jpwr - \noi{2}$ and $\gamma > \beta$ then $R(\gamma)$ is the capacity,
	\item if $\noi{1} + \jpwr > \noi{2}$, and $\xpwr \le \noi{1} + \jpwr - \noi{2}$ or $\gamma < \beta$ then $R(\beta)$ is achievable.
	\end{itemize}
\end{theorem}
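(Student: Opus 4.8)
The plan is to collapse the matrix max--min of \eqref{eq:rankdef_maxmin} to a one-dimensional concave problem and then read off the three cases. By Theorem~\ref{thm:optjamdir} an optimal input covariance is diagonal, say $\Sx = \diag(\alpha,\Gamma-\alpha)$; using full transmit power is without loss because each of the two candidate rates below is increasing in both diagonal entries, and by Theorem~\ref{thm:optjamdir} the worst-case $\mbf{g}$ is $\mbf{e}_1$ or $\mbf{e}_2$. Jamming subchannel~$1$ yields exactly the rate $f_1(\alpha) := R(\alpha)$ of the theorem, while jamming subchannel~$2$ yields $f_2(\alpha) := \tfrac12\log(1+\tfrac{\alpha}{\noi{1}}) + \tfrac12\log(1+\tfrac{\Gamma-\alpha}{\jpwr+\noi{2}})$; by the criterion \eqref{eq:optjamcond} the jammer strictly prefers subchannel~$1$ iff $\alpha > \beta$. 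Hence $R^{\ast} = \max_{\alpha\in[0,\Gamma]} g(\alpha)$ with $g := \min\{f_1,f_2\}$, where $g = f_2$ on $[0,\beta]$, $g = f_1$ on $[\beta,\Gamma]$, and $f_1(\beta)=f_2(\beta)$ (this last identity is exactly the boundary of the criterion in the proof of Theorem~\ref{thm:optjamdir}).

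Next I would analyze $g$. Each of $f_1,f_2$ is a sum of two concave functions of $\alpha$, hence concave, so $g$ is concave as a pointwise minimum of concave functions, and its maximizer is the unique stationary point or an endpoint. A one-line derivative computation gives $f_1'=0$ at $\alpha=\gamma=\tfrac12(\Gamma-(\noi{1}+\jpwr-\noi{2}))$ and $f_2'=0$ at $\alpha=\gamma+\jpwr$. The hypothesis $\noi{1}\le\noi{2}$ enters only here: evaluating \eqref{eq:optjamcond} at $\alpha=\Gamma/2$ shows $\beta\le\Gamma/2$, and $\gamma+\jpwr=\tfrac12(\Gamma+\jpwr+\noi{2}-\noi{1})\ge\Gamma/2$, so $\beta\le\gamma+\jpwr$ and $f_2$ is nondecreasing on $[0,\beta]$. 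Therefore $g$ is maximized at $\alpha=\gamma$ (clipped to $[0,\Gamma]$) when $\gamma\ge\beta$ and at $\alpha=\beta\in(0,\Gamma)$ when $\gamma<\beta$. Matching the cases: if $\noi{1}+\jpwr\le\noi{2}$ then $\gamma\ge\Gamma/2\ge\beta$ and $\gamma\ge0$, so $R^{\ast}=R(\gamma)$; if $\noi{1}+\jpwr>\noi{2}$, the assumption $\Gamma>\noi{1}+\jpwr-\noi{2}$ places $\gamma\in(0,\Gamma)$, and then $\gamma>\beta$ gives $R^{\ast}=R(\gamma)$ whereas $\Gamma\le\noi{1}+\jpwr-\noi{2}$ (equivalently $\gamma\le0$) or $\gamma<\beta$ gives $R^{\ast}=R(\beta)$.

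Finally I would convert $R^{\ast}$ into the claimed statements. Achievability of $R^{\ast}$ in each case is precisely the fixed-input-covariance random-coding guarantee \eqref{eq:rankdef_maxmin} quoted before the theorem, so $R(\gamma)$ (first two cases) and $R(\beta)$ (third case) are achievable. For the converse in the first two cases, let the jammer commit to the deterministic direction $\mbf{g}=\mbf{e}_1$ with $\mbf{s}$ an i.i.d.\ $\mc{N}(0,\jpwr)$ sequence (a standard truncation handles the almost-sure power constraint at negligible cost); the channel then reduces to a memoryless vector Gaussian channel with noise covariance $\diag(\noi{1}+\jpwr,\noi{2})$, whose capacity, even with common randomness, is $\max_{\tr\Sx\le\Gamma}\tfrac12\log\frac{\det(\Sx+\diag(\noi{1}+\jpwr,\noi{2}))}{\det\diag(\noi{1}+\jpwr,\noi{2})}$, which by Hadamard's inequality and waterfilling equals $f_1(\gamma)=R(\gamma)$ under the case hypotheses. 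Since the $(2,2,1)$ MIMO AVC capacity cannot exceed the capacity against this one jammer, it is at most $R(\gamma)$, matching achievability.

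I expect the only real obstacle to be the bookkeeping in the second paragraph: isolating exactly where $\noi{1}\le\noi{2}$ is used (to sandwich $\beta$ between the two stationary points of $g$) and tracking the endpoint/clipping subcases so the three regimes line up with the theorem's hypotheses. The reduction via Theorem~\ref{thm:optjamdir} does the conceptual work, and the third case is only an achievability claim precisely because the max--min value $R(\beta)$ need not coincide with the corresponding min--max, so no matching converse is available there.
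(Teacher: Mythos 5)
Your proposal is correct and follows essentially the same route as the paper's proof: use Theorem \ref{thm:optjamdir} to reduce to a one-dimensional power split $\xpwr_1=\alpha$ with the jammer's preferred direction switching at $\beta$, maximize the resulting piecewise (min of two waterfilling-type) rate to land at $\gamma$ or $\beta$, and match the first two cases against the Gaussian-jammer/waterfilling upper bound of Theorem \ref{thm:mimoupper}. Your treatment is somewhat more careful than the paper's (explicit concavity and stationary-point bookkeeping, the $\beta\le\Gamma/2\le\gamma+\jpwr$ sandwich, and the truncated-Gaussian converse), but it is the same argument in substance.
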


\begin{proof}
Note that the optimal jamming strategy is $\mbf{g} = \mbf{e}_1$ for $\xpwr_1 > \beta$ and $\mbf{g} = \mbf{e}_2$ for $\xpwr_1 < \beta$.  Suppose first that $\noi{1} + \jpwr \le \noi{2}$.  In this case, the waterfilling power allocation for noise spectrum $(\noi{1} + \jpwr, \noi{2})$ is such that $\xpwr_1 > \beta$, so the capacity is given by \eqref{eq:rate:wfill}, coinciding with Theorem \ref{thm:mimoupper}.
	
Now suppose that $\noi{1} + \jpwr > \noi{2}$.  If $\xpwr > \noi{1} + \jpwr - \noi{2}$, then we consider two sub-cases.  If the waterfilling power allocation $(\gamma, \Gamma - \gamma)$ for noise spectrum $(\noi{1} + \jpwr, \noi{2})$ satisfies $\gamma > \beta$, then that power allocation is optimal and the rate is given by $R(\gamma)$.  However, if $\gamma \le \beta$, then setting $\xpwr_1 = \beta$ is optimal.  For $\xpwr_1 < \beta$ the optimal jammer strategy is $\mbf{e}_2$, but the achievable rate is monotonically increasing in $\xpwr_1$.  For $\xpwr_1 > \beta$ the optimal jamming strategy is $\mbf{e}_2$ and the rate is monotonically decreasing in $\xpwr_1$.  Hence $\xpwr_1 = \beta$ is optimal.
\end{proof}

From this result we can see the difficulty in \eqref{eq:rankdef_maxmin}. 
Suppose $\xpwr = 6$, $\noi{1} = 3$, $\noi{2} = 1$, and $\jpwr = 4$.  If $\mbf{g} = (0,1)^T$, then the waterfilling solution for a channel with Gaussian noise of covariance $\Sw + \jpwr \mbf{g} \mbf{g}^T = \diag(3, 5)$ is to choose $\Sx = \diag(4,2)$.  However, for this choice of $\Sx$ the optimal $\mbf{g} = (1,0)^T$.  What this shows is that the max and min in \eqref{eq:rankdef_maxmin} cannot be reversed.  In two cases the waterfilling allocation is optimal, but in the other cases the transmitter chooses its power allocation and rate such that the jammer can jam either subchannel.  The transmitter is forced to choose a power allocation different from Theorem \ref{thm:mimoupper} even when the jammer has a rank constraint.

Figure \ref{fig:ex221} shows $R_{\mathrm{wfill}}$ and the rate given by Theorem \ref{thm:optmimo} as a function of the interference power $\jpwr$.  The curves are equal until the point where $\jpwr = \noi{2} - \noi{1}$.  For such values, the jammer can realize the waterfilling strategy.  However, for larger values of $\jpwr$, the rank constraint on the jammer prohibits waterfilling across multiple channels.  Under a diagonal input covariance the optimal strategy is to allocate all of $\jpwr$ to a single channel.  For large interference powers, the rank constraint allows the transmitter and receiver to communicate at rates strictly higher than $R_{\mathrm{wfill}}$.

\begin{figure}
\centering 
\includegraphics[width=3.0in]{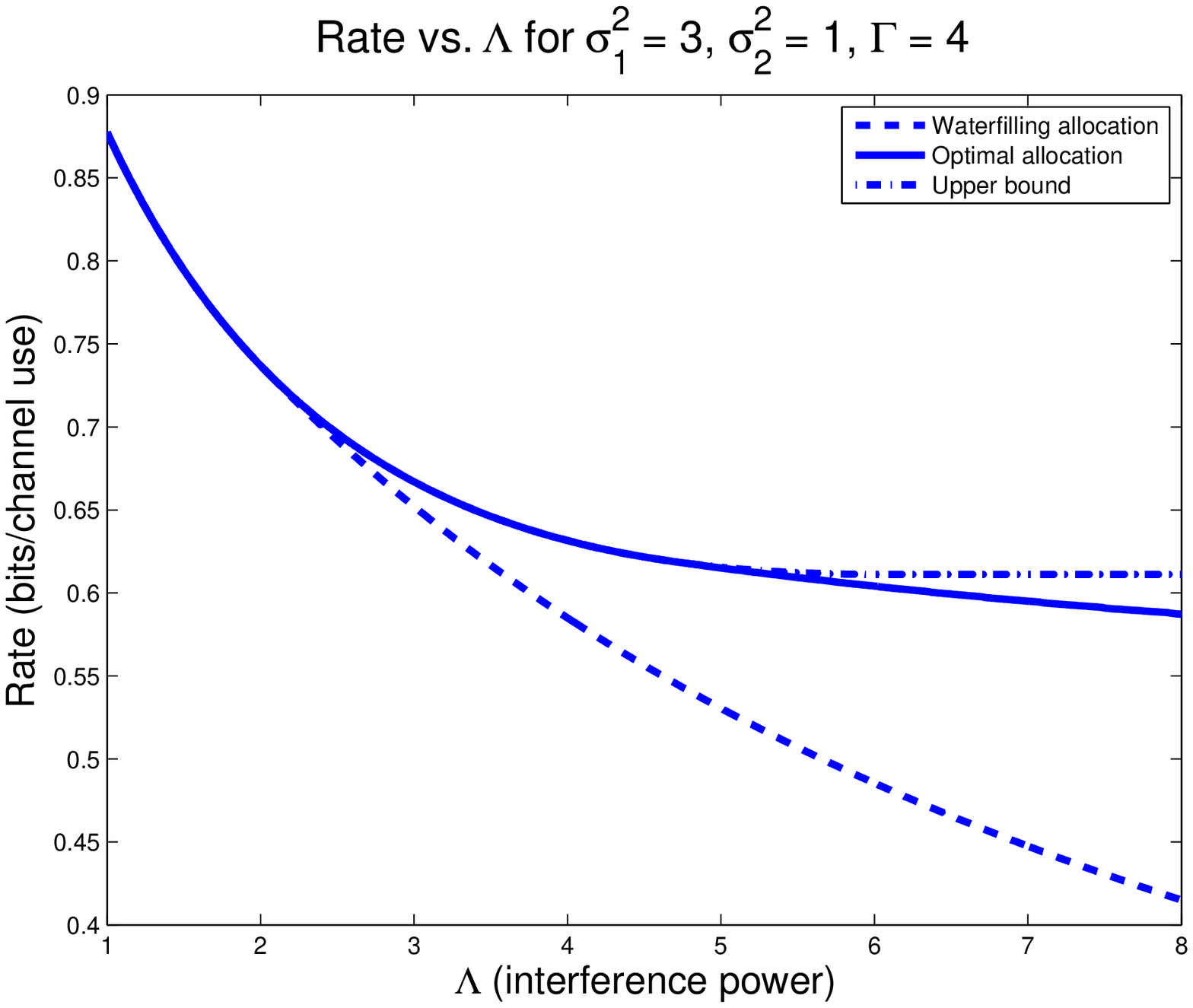}
\caption{An example of the achievable rates versus the interference power $\jpwr$ for $\noi{1} = 1$, $\noi{2} = 3$, $\xpwr = 4$.}
\label{fig:ex221} 
\end{figure}

We can also examine the asymptotic behavior of the capacity as $\jpwr \to \infty$.  The optimal jamming strategy is still to jam the less noisy channel, so the noise-plus-interference spectrum becomes more and more unbalanced.  Clearly the subchannel with noise $\noi{1} + \jpwr$ contributes no rate to the capacity in the limit.  However, any power in the first subchannel will still contribute.  As $\jpwr \to \infty$ the limiting behavior is given by the threshold condition in \eqref{eq:optjamcond}. The optimal power allocation is given by
	\begin{align*}
	\frac{\xpwr_1}{\noi{1}} = \frac{\xpwr_2}{\noi{2}}.
	\end{align*}

\begin{corollary}
For the $(2,2,1)$ MIMO AVC with $\noi{1} < \noi{2}$, the following rate is achievable in the limit as $\jpwr \to \infty$:
	\begin{align*}
	R = \frac{1}{2} \log\left(1 + \frac{\xpwr}{\noi{1} + \noi{2}} \right).
	\end{align*}
\end{corollary}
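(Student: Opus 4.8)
The plan is to read the result straight off the third case of Theorem \ref{thm:optmimo} and then pass to the limit $\jpwr \to \infty$. First I would observe that since $\noi{1}$, $\noi{2}$, and $\xpwr$ are fixed while $\jpwr \to \infty$, for all sufficiently large $\jpwr$ we have both $\noi{1} + \jpwr > \noi{2}$ and $\xpwr \le \noi{1} + \jpwr - \noi{2}$. Hence Theorem \ref{thm:optmimo} places us in its third case, so the rate $R(\beta)$ is achievable, where $\beta = \beta(\jpwr)$ is the value of $\xpwr_1$ equalizing the two expressions in \eqref{eq:optjamcond}, i.e.\ the solution of
\[
\frac{\beta/\noi{1}}{\beta + \noi{1} + \jpwr} = \frac{(\xpwr - \beta)/\noi{2}}{\xpwr - \beta + \noi{2} + \jpwr}.
\]
The left side increases from $0$ and the right side decreases to $0$ as $\beta$ ranges over $[0,\xpwr]$, so there is a unique crossing $\beta(\jpwr) \in (0,\xpwr)$; in particular $\beta(\jpwr)$ is bounded, uniformly in $\jpwr$.

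The second step is to identify $\lim_{\jpwr\to\infty}\beta(\jpwr)$. Clearing denominators in the defining equation and dividing by $\jpwr$, the factors $\xpwr - \beta + \noi{2}$ and $\beta + \noi{1}$ are $O(1)$ while the remaining terms carry the factor $\jpwr$, so the equation converges (using the a priori bound $\beta(\jpwr)\in(0,\xpwr)$, every subsequential limit satisfies the limit equation, which is solved uniquely) to $\beta/\noi{1} = (\xpwr - \beta)/\noi{2}$. Therefore $\beta(\jpwr) \to \noi{1}\xpwr/(\noi{1}+\noi{2})$ and $\xpwr - \beta(\jpwr) \to \noi{2}\xpwr/(\noi{1}+\noi{2})$, which is exactly the allocation $\xpwr_1/\noi{1} = \xpwr_2/\noi{2}$ anticipated in the discussion preceding the corollary; it is the allocation that makes the jammer indifferent between the two subchannels as its power grows.

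The final step is to evaluate the limiting rate. Using
\[
R(\beta) = \frac{1}{2}\log\left(1 + \frac{\beta}{\jpwr + \noi{1}}\right) + \frac{1}{2}\log\left(1 + \frac{\xpwr - \beta}{\noi{2}}\right),
\]
the first term tends to $0$ because $\beta$ stays bounded while $\jpwr \to \infty$, and the second term tends to $\frac{1}{2}\log\left(1 + (\noi{2}\xpwr/(\noi{1}+\noi{2}))/\noi{2}\right) = \frac{1}{2}\log\left(1 + \xpwr/(\noi{1}+\noi{2})\right)$. Thus for every $\epsilon > 0$ the rate $\frac{1}{2}\log(1 + \xpwr/(\noi{1}+\noi{2})) - \epsilon$ is achievable for all sufficiently large $\jpwr$, which is the claim. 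There is no genuine obstacle here: the only mildly delicate point is justifying the interchange of the limit with the implicit equation defining $\beta$, and that is handled by the uniform bound $\beta(\jpwr)\in(0,\xpwr)$; everything else is direct substitution into Theorem \ref{thm:optmimo}.
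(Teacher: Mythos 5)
Your proposal is correct and follows essentially the same route as the paper: the paper's (brief) justification is exactly that the equalizing condition \eqref{eq:optjamcond} forces $\xpwr_1/\noi{1} = \xpwr_2/\noi{2}$ in the limit, the jammed subchannel contributes no rate, and the remaining subchannel gives $\tfrac{1}{2}\log\bigl(1 + \xpwr/(\noi{1}+\noi{2})\bigr)$. Your version simply makes this explicit via the third case of Theorem \ref{thm:optmimo} and the uniform bound $\beta(\jpwr) \in (0,\xpwr)$, which is a fine (and slightly more careful) write-up of the same argument.
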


Note that this rate is less than $R_{\mathrm{ub}}= \frac{1}{2} \log \left( 1 + \frac{\xpwr}{\noi{2}} \right)$ in \eqref{eq:mimoub}.  We conjecture that this loss with respect to the optimal strategy knowing the jammer's strategy is inherent due to the adversarial nature of the channel.

A different regime is when $\xpwr$ and $\jpwr$ go to $\infty$ while keeping the ratio $\rho = \xpwr/\jpwr$ fixed.  
 
\begin{corollary}
For the $(2,2,1)$ MIMO AVC with $\noi{1} < \noi{2}$, if $\xpwr,\jpwr \to \infty$ with fixed $\rho = \xpwr/\jpwr$ then the achievable rate scales according to
	\begin{align*}
	R(\rho,\xpwr) = O(\log \xpwr) 
		+ \frac{1}{2} \log\left(1 + \frac{\rho}{2} \right).
	\end{align*}
\end{corollary}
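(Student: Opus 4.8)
The plan is to prove this by displaying an explicit, suboptimal-but-clean power allocation, thereby sidestepping the case analysis of Theorem~\ref{thm:optmimo}. Recall from the discussion around \eqref{eq:rankdef_maxmin} that for any fixed transmit covariance $\Sx$ with $\tr(\Sx)\le\xpwr$, the rate $\min_{\mbf{g}:\norm{\mbf{g}}=1}\frac12\log\frac{\det(\Sx+\Sw+\jpwr\mbf{g}\mbf{g}^T)}{\det(\Sw+\jpwr\mbf{g}\mbf{g}^T)}$ is achievable with randomized coding over the $(2,2,1)$ MIMO AVC. I would take the symmetric allocation $\Sx=\tfrac{\xpwr}{2}I_2$, which clearly meets the power constraint.

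First I would pin down the jammer's best response to this $\Sx$. Since $\Sx$ is diagonal with $\xpwr_1=\xpwr_2=\xpwr/2$ and $\noi{1}<\noi{2}$, the criterion \eqref{eq:optjamcond} of Theorem~\ref{thm:optjamdir} compares $A=\frac{(\xpwr/2)/\noi{1}}{\xpwr/2+\noi{1}+\jpwr}$ with $B=\frac{(\xpwr/2)/\noi{2}}{\xpwr/2+\noi{2}+\jpwr}$; clearing denominators one finds that $A-B$ has the same sign as $(\noi{2}-\noi{1})(\xpwr/2+\jpwr+\noi{1}+\noi{2})>0$, so the maximizing index is $m=1$ and the worst jamming direction is $\mbf{g}=\mbf{e}_1$. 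Because $\Sw+\jpwr\mbf{e}_1\mbf{e}_1^T$ and $\Sx+\Sw+\jpwr\mbf{e}_1\mbf{e}_1^T$ are then both diagonal, the achievable rate for this allocation is exactly
\[
R(\rho,\xpwr)\ =\ \frac12\log\!\left(1+\frac{\xpwr/2}{\noi{1}+\jpwr}\right)+\frac12\log\!\left(1+\frac{\xpwr/2}{\noi{2}}\right).
\]

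The last step is the asymptotics. Substituting $\xpwr=\rho\jpwr$ and letting $\jpwr\to\infty$, the first term obeys $\frac{\xpwr/2}{\noi{1}+\jpwr}=\frac{\rho/2}{1+\noi{1}/\jpwr}\to\rho/2$, so it converges to $\frac12\log(1+\rho/2)$. The second term equals $\frac12\log\!\big(1+\frac{\rho\jpwr}{2\noi{2}}\big)=\frac12\log\xpwr-\frac12\log(2\noi{2})+o(1)$, which is $O(\log\xpwr)$. Summing the two gives $R(\rho,\xpwr)=O(\log\xpwr)+\frac12\log(1+\rho/2)$, as claimed.

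There is no real obstacle here. The only points needing care are (i) confirming the jammer's response via Theorem~\ref{thm:optjamdir} --- which is exactly why the symmetric allocation is convenient, since it makes \eqref{eq:optjamcond} favor the quieter subchannel by a one-line inequality --- and (ii) remembering that the corollary asserts only \emph{achievability}, so one need not argue that equal power splitting is optimal; the optimal allocation of Theorem~\ref{thm:optmimo} exhibits the same $\Theta(\log\xpwr)$ growth and would only sharpen the $O(1)$ remainder (and the particular constant appearing in the subchannel-$1$ contribution).
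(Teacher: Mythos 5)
Your proposal is correct, and since the paper states this corollary without an explicit proof, your argument stands on its own as a valid derivation. Your route — fix the symmetric allocation $\Sx = \tfrac{\xpwr}{2}I_2$, invoke the achievability of $\min_{\mbf{g}}L(\Sx,\mbf{g})$ for a fixed covariance from \eqref{eq:rankdef_maxmin}, identify the jammer's best response as $\mbf{e}_1$ via \eqref{eq:optjamcond} (your sign computation $(\noi{2}-\noi{1})(\xpwr/2+\jpwr+\noi{1}+\noi{2})>0$ is right), and then take limits — reproduces the stated expression exactly: the quiet subchannel contributes $\tfrac12\log\xpwr+O(1)$ and the jammed subchannel contributes $\tfrac12\log(1+\rho/2)+o(1)$, which is precisely where the constant $\rho/2$ comes from. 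The paper's implicit route would instead track the optimal allocation of Theorem \ref{thm:optmimo}, i.e.\ $R(\gamma)$ or $R(\beta)$ depending on the case, whose asymptotic fraction of power on subchannel 1 is generally \emph{not} $\rho/2$ when $\noi{1}\ne\noi{2}$ (the threshold equation gives a fraction strictly below $\rho/2$, and waterfilling gives $(\rho-1)/2$); that route yields the same $\Theta(\log\xpwr)$ growth but a different split between the constant and the $O(1)$ remainder, and requires the case analysis you deliberately avoid. So your simpler equal-split scheme buys a one-line best-response check and literally matches the stated formula, at the cost of conceding a bounded amount of rate relative to the optimized allocation — which is immaterial for a statement phrased up to $O(\log\xpwr)$, and your closing caveat acknowledges this correctly.
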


The results here are a first step towards understanding the effect of rank-limited uncertainty in interference for MIMO systems.  There are a number of interesting open questions for future work.  Firstly, finding the optimal rate in Theorem \ref{thm:optjamdir} requires optimizing over different sets of power allocations corresponding to different optimal jamming strategies.  We conjecture that the optimal rate always corresponds to the jammer setting $\mbf{g} = \mbf{e}_1$, where $\sigma_1^2$ is the smallest noise variance.  Secondly, showing that this rate is indeed the capacity of the MIMO AVC requires different techniques than the full rank case, where the jammer and transmitter strategies formed a saddle point.  Finally, extending these results to more general numbers of transmit, receive, and jamming antennas would be very interesting and may shed some light into other problems in rank-limited optimization.

\section{Conclusion}

In this paper we investigated some variations on the basic Gaussian AVC model to illustrate different aspects of coding for worst-case interference.  One way of interpreting these results is as intermediate stages between worst-case and average-case analysis.  The worst-case interference in the GAVC can depend on the codebook of the transmitted message.  However, with additional resources, this worst-case behavior can be relaxed to attain rates closer to the average-case behavior.  We demonstrated that a very small amount of common randomness is sufficient to achieve the randomized coding capacity of the GAVC, that a known interference signal can help mask the codeword and allow reliable communication with moderate interference, and that extra degrees of freedom can overcome even worst-case interference.  

There are still several open questions that remain.  Is the $O(\log n)$ bits of common randomness necessary to achieve the randomized coding capacity?  Finding lower bounds on the amount of randomness can quantify how close the worst case is to the average case.  Is the dirty-paper coding scheme optimal?  We conjecture that it is, in the sense that the encoder cannot exploit the known interference beyond the geometric approach that we describe.  For the general MIMO AVC with rank-limited jammer is there a simple characterization of the optimal transmitter and jammer power allocations?  We conjecture that the worst-case interference jams the strongest sub-channel.

Our results show that the analysis of coding schemes robust to unknown interference is highly dependent on the resources available to the encoder and decoder.  The arguments here are geometric in nature, and it may be interesting to pursue the relationship between worst-case and average-case structures for other high-dimensional problems.

\bibliographystyle{IEEEtran}
\bibliography{gavc}

\end{document}